\DeclareMathOperator*{\argmin}{arg\,min}
\newcommand{\crl}[1]{\ensuremath{ \left\{ #1 \right\} }}
\newcommand{\edg}[1]{\ensuremath{\! \left[ #1 \right] }}
\newcommand{\brak}[1]{\ensuremath{\left( #1 \right)}}
\newcommand{\n}[1]{\ensuremath{ \| #1 \| }}
\newcommand{\no}[1]{\ensuremath{ \left\| #1 \right\| }}
\newcommand{\abs}[1]{\ensuremath{ \left| #1 \right| }}
 \newcommand{\be}{\begin{equation}}
 \newcommand{\ee}{\end{equation}}
 \newcommand{\bea}{\begin{eqnarray}}
 \newcommand{\eea}{\end{eqnarray}}
 \newcommand{\beas}{\begin{eqnarray*}}
\newcommand{\eeas}{\end{eqnarray*}}
\newcommand{\classname}[1]{\texttt{#1}} 
\newtheorem{theorem}{Theorem}[section]
\newtheorem{proposition}[theorem]{Proposition}
\newtheorem{lemma}[theorem]{Lemma}
\newtheorem{remark}[theorem]{Remark}
\newtheorem{example}[theorem]{Example}
\newtheorem{examples}[theorem]{Examples}
\newtheorem{framework}[theorem]{Framework}
\newtheorem{foo}[theorem]{Remarks}
\newenvironment{Remark}{\begin{remark}\rm}{\end{remark}}
\newcommand{\R}{\mathbb{R}}
\newcommand{\p}{\mathbb{P}}
\newcommand{\E}{\mathbb{E}}
\newcommand{\Var}{\operatorname{Var}}
\newcommand{\cF}{\mathcal{F}} 
\newcommand{\cS}{\mathcal{S}}
\newcommand{\myrowcolour}{\rowcolor[gray]{0.925}}
\title{Computation of conditional expectations\\ with guarantees\thanks{We thank 
Daniel Bartl, Sebastian Becker and Peter B\"uhlmann for fruitful discussions and helpful comments.}}
\author{
  Patrick Cheridito \\  
  Department of Mathematics\\
  ETH Zurich, Switzerland 
   \And
 Balint Gersey \\
  Department of Mathematics\\
  ETH Zurich, Switzerland
}
\begin{document}
\maketitle

\begin{abstract}
Theoretically, the conditional expectation of a square-integrable random variable 
$Y$ given a $d$-dimensional random vector $X$ can be obtained by minimizing the 
mean squared distance between $Y$ and $f(X)$ over all Borel measurable functions 
$f \colon \R^d \to \R$. However, in many applications this minimization problem cannot be solved exactly,
and instead, a numerical method which computes an approximate minimum 
over a suitable subfamily of Borel functions has to be used. The quality of the result depends on the adequacy of the subfamily and the performance of the numerical method. In this paper, we derive an 
expected value representation of the minimal mean squared distance which in many  
applications can efficiently be approximated with a standard Monte Carlo average.
This enables us to provide guarantees for the accuracy of any numerical approximation 
of a given conditional expectation. We illustrate the method by assessing the quality
of approximate conditional expectations obtained by linear, polynomial and neural 
network regression in different concrete examples.
\end{abstract}

\keywords{conditional expectation, least squares regression, Monte Carlo methods,
numerical guarantees, trustworthy AI}
$\mbox{}$\\[-1.5mm]
{\bf MSC 2020} \, 62J02, 65G99, 65C05, 65C20, 68T05

\section{Introduction}
\label{sec:intro}

The goal of this paper is to compute the conditional expectation $\E[Y \mid X]$ of a square-integrable 
random variable $Y$ given a $d$-dimensional random vector $X$, both 
defined on a common probability space $(\Omega, \cF, \p)$. The accurate estimation of 
conditional expectations is an important problem arising in different branches of science and
engineering as well as finance, economics and various business applications. 
In particular, it plays a central role in 
regression analysis, which tries to model the relationship between a 
response variable $Y$ and a number of explanatory variables $X_1, \dots, X_d$
\citep[see, e.g.,][]{norman98, ryan2008modern, hastie2009elements, chatterjee2015regression}. 
But it also appears in different computational problems, such as the numerical approximation 
of partial differential equations and backward stochastic differential equations
\citep[see, e.g.,][]{bally1997approximation, chevance1997numerical, BT, gobet2005regression, GobetT, FTW, beck2019deep},
stochastic partial differential equations \citep[see, e.g.,][]{beck2020deep},
stochastic control problems \citep[see, e.g.,][]{astrom70,bain2008fundamentals},
stochastic filtering \citep[see, e.g.,][]{jazwinski2007stochastic}, 
complex financial valuation problems \citep[see, e.g.][]{carriere96,longstaff2001valuing, tsitsiklis2001regression,broadie2004stochastic, broadie2008,becker2020pricing}
as well as financial risk management \citep[see, e.g.][]{lee2003computing, gordy2010nested, 
broadie2011efficient, BauerReussSinger, Cher}.
In addition, conditional expectations are closely related to 
squared loss minimization problems arising in various machine learning 
applications \citep[see, e.g.,][]{hastie2009elements, goodfellow2016deep}.

If it is possible to simulate from the conditional distribution of $Y$ given $X$, 
the conditional expectation $\E[Y \mid X]$ can be approximated with nested Monte Carlo simulation; see, e.g., 
\citep[see, e.g.,][]{BauerReussSinger, broadie2011efficient, broadie2015risk}. 
While the approach can be shown to converge for increasing sample sizes, 
it often is too time-consuming to be useful in practical applications.
On the other hand, it is well known that $\E[Y \mid X]$ is of the form $\bar{f}(X)$ 
for a regression function $\bar{f} \colon \R^d \to \R$ which can be characterized as
a minimizer\footnote{The conditional expectation $\E[Y \mid X]$ is unique up to $\p$-almost sure equality.
Accordingly, the regression function $\bar{f}$ is unique up to almost sure equality 
with respect to the distribution of $X$.} 
of the mean squared distance
\be \label{msd}
\mathbb{E} \edg{(Y - f(X))^2}
\ee
over all Borel functions $f \colon \R^d \to \R$ \citep[see, e.g.,][]{bru1985meilleures}. 
However, in many applications, the minimization problem \eqref{msd} cannot be solved exactly. For instance,
the joint distribution of $X$ and $Y$ might not be known precisely, or the 
problem might be too complicated to admit a closed-form solution.
In such cases, it can be approximated with a least squares regression, 
consisting in minimizing an empirical mean squared distance
\be \label{emsd}
\frac{1}{M} \sum_{m=1}^M \brak{\mathcal{Y}^m - f(\mathcal{X}^m)}^2 
\ee
based on realizations $(\mathcal{X}^m ,\mathcal{Y}^m)$ of $(X,Y)$ over a suitable 
family ${\cal S}$ of Borel functions $f \colon \R^d \to \R$. 
This typically entails the following three types of approximation errors: 
\begin{itemize}
\item[{\rm (i)}] a function approximation error if the true regression function $\bar{f}$ 
does not belong to the function family ${\cal S}$; 

\item[{\rm (ii)}] 
a statistical error stemming from estimating the expected value \eqref{msd} with 
\eqref{emsd};

\item[{\rm (iii)}] a numerical error if the minimization of \eqref{emsd} over $\cS$ has to be solved numerically. 
\end{itemize}
Instead of analyzing the errors (i)--(iii), we here derive an alternative 
representation of the minimal mean squared distance $\E [(Y - \bar{f}(X))^2]$, 
which does not involve a minimization problem or require knowledge of the 
true regression function $\bar{f}$. This enables us to provide quantitative estimates on 
the accuracy of any numerical approximation $\hat{f}$ of $\bar{f}$. In particular, if $\hat{f}$ 
is determined with a machine learning method that is difficult to interpret, our approach contributes to 
trustworthy AI.

While the empirical mean squared distance \eqref{emsd} can directly be minimized 
using realizations $(\mathcal{X}^m, \mathcal{Y}^m)$ of $(X,Y)$, 
our approach to derive error bounds for the approximation of $\bar{f}$ requires $Y$ to
be of the form $Y = h(X,V)$ for a known function $h \colon \R^{d+k} \to \R$ and a $k$-dimensional random 
vector $V$ independent of $X$. In typical statistical applications, only realizations of $(X,Y)$ can be observed 
and a structure of the form $Y = h(X,V)$ would have to be inferred from the data. 
But in many of the computational problems mentioned above, $Y$ is directly given in the 
form $Y = h(X,V)$.

The rest of the paper is organized as follows: In Section \ref{sec:appr}, we first 
introduce the notation and some preliminary results before
we formulate the precise mean squared distance minimization problem we are considering
along with its empirical counterpart. Then we discuss upper bounds of the minimal 
mean squared distance and their approximation with Monte Carlo averages. 
In Section \ref{sec:est} we derive an expected value representation of the 
minimal mean squared distance which makes it possible to derive bounds on 
the $L^2$-error of any numerical approximation $\hat{f}$ of the true regression function $\bar{f}$.
In Section \ref{sec:ex} we compute conditional expectations in different examples 
using linear regression, polynomial regression and feedforward neural networks with varying 
activation functions. We benchmark the numerical results against values obtained from 
our expected value representation of the 
minimal mean squared distance and derive $L^2$-error estimates.
Section \ref{sec:conclusion} concludes, and in the Appendix we report 
auxiliary numerical results used to compute the figures shown in 
Section \ref{sec:ex}.

\section{Numerical approximation of conditional expectations}
\label{sec:appr}

\subsection{Notation and preliminaries}
\label{sec:notation}

Let us first note that the mean squared distance \eqref{msd} does not 
necessarily have to be minimized with respect to the original probability measure $\p$.
Indeed, the regression function $\bar{f} \colon \R^d \to \R$ only 
depends on the conditional distribution of $Y$ given $X$ and not on the distribution 
$\nu_X$ of $X$. More precisely, the measure $\p$ can be disintegrated as 
\[
\p[A] = \int_{\R^d} \p[A \mid X = x] d\nu_X(x), \quad A \in {\cal F},
\]
where $\p[. \mid X = x]$ is a regular conditional version of $\p$ given $X$. For any Borel 
probability measure $\nu$ on $\R^d$ that is absolutely 
continuous with respect to $\nu_X$, 
\[
\p^{\nu}[A] := \int_{\R^d} \p[A \mid X = x] d\nu(x), \quad A \in {\cal F},
\]
defines a probability measure on $\Omega$ under which $X$ has the modified 
distribution $\nu$ while the conditional distribution of $Y$ given $X$ is the same as 
under $\p$. Let us denote by $\E^{\nu}$ the expectation with respect to $\p^{\nu}$ 
and by ${\cal B}(\R^d ; \R)$ the set of all Borel functions $f \colon \R^d \to \R$. 
With this notation, one has the following.

\begin{lemma} \label{lemma:choice}
Assume $\E^{\nu} Y^2 < \infty$. Then a minimizer $\tilde{f} \colon \R^d \to \R$ of the 
distorted minimal mean squared distance
\be \label{D}
D^{\nu} := \min_{f \in {\cal B}(\R^d ; \, \R)} \E^{\nu} \edg{ \brak{Y - f(X)}^2 }
\ee
agrees with $\bar{f} \colon \R^d \to \R$ $\nu$-almost surely. 
In particular, if $\nu$ has the same null sets as $\nu_X$, then $\tilde{f} = \bar{f}$ $\nu_X$-almost surely.
\end{lemma}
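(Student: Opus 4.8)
The key observation is that the conditional expectation $\E[Y \mid X = x]$ is a property of the regular conditional distribution $\p[\,\cdot \mid X = x]$ alone, and this family of conditional measures is, by construction, identical under $\p^\nu$ and under $\p$. So the natural strategy is to reduce the $L^2$-minimization problem on $\Omega$ to a pointwise (in $x$) minimization problem over the conditional distributions, carried out $\nu$-almost surely, and then note that the pointwise minimizer does not see $\nu$ at all.

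\emph{First}, I would use the disintegration formula to rewrite the objective in \eqref{D}: for any $f \in {\cal B}(\R^d;\R)$,
\[
\E^{\nu}\edg{\brak{Y - f(X)}^2} = \int_{\R^d} \E\edg{\brak{Y - f(x)}^2 \,\middle|\, X = x}\, d\nu(x),
\]
using that under $\p[\,\cdot \mid X = x]$ the random vector $X$ equals $x$ almost surely and the conditional law of $Y$ is the $\p$-version. \emph{Second}, for each fixed $x$ the inner integrand is minimized over the real parameter $f(x)$ by the mean of the conditional distribution, i.e.\ by $\E[Y \mid X = x] = \bar f(x)$ (this is the elementary fact that $c \mapsto \E[(Z-c)^2]$ is minimized at $c = \E Z$, valid whenever $\E[Y^2 \mid X = x] < \infty$, which holds for $\nu$-a.e.\ $x$ because $\E^\nu Y^2 < \infty$). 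Hence the integral is minimized, and $D^\nu$ attained, precisely when $f(x) = \bar f(x)$ for $\nu$-a.e.\ $x$; choosing $\tilde f = \bar f$ shows the minimum exists and any minimizer $\tilde f$ agrees with $\bar f$ $\nu$-a.s. \emph{Third}, if $\nu$ and $\nu_X$ have the same null sets, then "$\tilde f = \bar f$ $\nu$-a.s." and "$\tilde f = \bar f$ $\nu_X$-a.s." are the same statement, giving the final claim.

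\emph{Main obstacle.} The substantive point is justifying the interchange "$\E^\nu[\,\cdot\,] = \int \E[\,\cdot \mid X=x]\, d\nu(x)$" at the level of the squared integrand and then performing the minimization \emph{inside} the integral. One must be careful that (a) $(Y - f(X))^2$ is a genuine function of $(X,Y)$ so that its $\p^\nu$-expectation really is $\int \E[(Y-f(x))^2 \mid X=x]\,d\nu(x)$ — here measurability of $(x,\omega)\mapsto \p[\,\cdot\mid X=x]$ and Fubini-type arguments for regular conditional distributions are needed; and (b) the pointwise minimization can be glued into a global one, i.e.\ that $\bar f$, defined $\nu_X$-a.e.\ as a version of $x \mapsto \E[Y\mid X=x]$, is also a version of the conditional mean under $\p^\nu$ — which is exactly the content of "the conditional law of $Y$ given $X$ is unchanged under $\p^\nu$", already asserted in the text preceding the lemma. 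Modulo these standard measure-theoretic checks, which I would handle by citing the disintegration setup already established, the argument is short: reduce to a one-dimensional least-squares problem fibre-by-fibre, solve it by the mean, and observe the solution is $\nu$-independent.
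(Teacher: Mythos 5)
Your proposal is correct and follows essentially the same route as the paper's proof: disintegrate $\E^{\nu}\edg{(Y-f(X))^2}$ over the regular conditional distribution $\p[Y \in dy \mid X=x]$, minimize fibre-by-fibre to see that any minimizer must equal the conditional mean for $\nu$-almost all $x$, and use $\nu \ll \nu_X$ to identify this with $\bar f$. The measure-theoretic points you flag (the Fubini-type interchange and that $\bar f$'s $\nu_X$-a.e.\ characterization transfers to a $\nu$-a.e.\ one) are exactly the steps the paper treats as standard, so nothing essential is missing.
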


\begin{proof}
A Borel function $\tilde{f} \colon \R^d \to \R$ minimizes \eqref{D} if and only if 
\[
\tilde{f}(x) = \argmin_{z \in \R} \int_{\R} (y - z)^2 \, \p[Y \in dy \mid X = x] \quad \mbox{for $\nu$-almost all } x \in \R^d.
\]
Since $\bar{f}$ has an analogous representation holding for $\nu_X$-almost all $x \in \R^d$,
it follows that $\tilde{f}$ agrees with $\bar{f}$ $\nu$-almost surely.
In particular, if $\nu$ has the same null sets as $\nu_X$, then 
$\tilde{f} = \bar{f}$ $\nu_X$-almost surely.
\end{proof}

Lemma \ref{lemma:choice} gives us the flexibility to choose a distribution $\nu$ on $\R^d$ which
assigns more weight than $\nu_X$ to regions of $\R^d$ that are important in a given application.
For instance, $\nu \ll \nu_X$ can be chosen so as to concentrate more weight around 
a given point $x_0$ in the support of $\nu_X$; see Lemma \ref{lemma:point} 
and Section \ref{ex:non-polydistorted} below. On the other hand, in financial risk management
one is usually concerned with the tails of loss distributions. Then the distribution 
$\nu_X$ can be tilted in the direction of large losses of a financial exposure;
see Section \ref{ex:Bindistorted} below.

\subsection{Upper bound of the minimal mean squared distance}
\label{sec:upper}

In many situations, the minimization problem \eqref{D} cannot be solved exactly.
But if one has access to $\p^{\nu}$-realizations $(\mathcal{X}^m, \mathcal{Y}^m)$ of $(X,Y)$,
the true regression function $\bar{f}$ can be approximated by minimizing the empirical 
mean squared distance
\be \label{empD}
\frac{1}{M} \sum_{m =1}^M \brak{\mathcal{Y}^m - f(\mathcal{X}^m)}^2
\ee
over $f$ in a subset ${\cal S}$ of ${\cal B}(\R^d ; \R)$. In the examples of Section \ref{sec:ex} below, we compare 
results obtained by using linear combinations of $1, X_1, \dots, X_d$, second order polynomials in 
$X_1, \dots, X_d$ as well as feedforward neural networks with different activation functions. 

But irrespective of the method used to obtain an approximation of
$\bar{f}$, any Borel measurable candidate regression function $\hat{f} : \R^d \to \R$ 
yields an upper bound
\be \label{Unu}
U^{\nu} := \E^{\nu} \edg{\brak{Y - \hat{f}(X)}^2} 
\ee
of the minimal mean squared distance $D^{\nu}$. However, since
in typical applications, $U^{\nu}$ cannot be calculated exactly, we approximate it with a
Monte Carlo estimate 
\be \label{UnuN}
U^{\nu}_N := \frac{1}{N} \sum_{n =1}^{N} \brak{Y^n - \hat{f}(X^n)}^2
\ee
based on $N$ independent $\p^{\nu}$-realizations $(X^n, Y^n)_{n =1}^{N}$ of $(X,Y)$
drawn independently of any data $(\mathcal{X}^m, \mathcal{Y}^m)_{m = 1}^M$ used to 
determine $\hat{f}$.

Provided that $\E^{\nu} [ (Y -\hat{f}(X))^2] < \infty$, one obtains from the strong 
law of large numbers that 
\[
\lim_{N \to \infty} U^{\nu}_N = U^{\nu} \quad \p^{\nu}\mbox{-almost surely.}
\]
To derive confidence intervals, we compute the sample variance
\[
v^{U,\nu}_{N} := \frac{1}{N-1} \sum_{n=1}^{N} 
\brak{\brak{Y^n - \hat{f}(X^n)}^2 - U^{\nu}_N }^2
\]
and denote, for $\alpha \in (0,1)$, by $q_{\alpha}$ the $\alpha$-quantile of the standard normal 
distribution. Then the following holds.

\begin{lemma} \label{lemma:ciU}
Assume $\E^{\nu} \, Y^4 < \infty$ and $\E^{\nu} \, |\hat{f}(X)|^4 < \infty$. Then, 
for every $\alpha \in (1/2,1)$,
\be \label{conub}
\liminf_{N \to \infty} \p^{\nu} \edg{\abs{U^{\nu} - U^{\nu}_N} \le 
 q_{1 - \alpha}\sqrt{\frac{v^{U,\nu}_{N}}{N}} \; } 
\ge 1 - 2 \alpha.
\ee
\end{lemma}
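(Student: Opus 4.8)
The plan is to recognize \eqref{conub} as the standard asymptotic confidence interval produced by the central limit theorem together with Slutsky's lemma, $\hat f$ being regarded as a fixed Borel function, independent of the test sample $(X^n, Y^n)_{n \ge 1}$. Put $Z^n := \brak{Y^n - \hat f(X^n)}^2$, so that under $\p^{\nu}$ the $Z^n$ are i.i.d., $U^{\nu}_N = \frac1N \sum_{n=1}^N Z^n$, and $\E^{\nu} Z^1 = U^{\nu}$. The first step is to record that $Z^1$ is square-integrable: convexity of $t \mapsto t^4$ on $[0,\infty)$ gives $\brak{Y - \hat f(X)}^4 \le 8\, Y^4 + 8\, \abs{\hat f(X)}^4$, hence $\E^{\nu}\edg{(Z^1)^2} = \E^{\nu}\edg{\brak{Y - \hat f(X)}^4} \le 8\, \E^{\nu} Y^4 + 8\, \E^{\nu} \abs{\hat f(X)}^4 < \infty$ by hypothesis; in particular $Z^1 \in L^1(\p^{\nu})$ and $U^{\nu} < \infty$. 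Write $\sigma^2 := \Var^{\nu}(Z^1)$. If $\sigma^2 = 0$ then $U^{\nu}_N = U^{\nu}$ $\p^{\nu}$-a.s.\ for every $N$, the probability in \eqref{conub} equals $1$, and there is nothing to prove, so assume $\sigma^2 > 0$.

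The second step is the consistency of the sample variance, $v^{U,\nu}_{N} \to \sigma^2$ $\p^{\nu}$-a.s.\ as $N \to \infty$. Writing $v^{U,\nu}_{N} = \frac{N}{N-1}\brak{\frac1N \sum_{n=1}^N (Z^n)^2 - (U^{\nu}_N)^2}$ and applying the strong law of large numbers to $\brak{(Z^n)^2}_{n \ge 1}$ (integrable by Step~1) and to $\brak{Z^n}_{n \ge 1}$ yields $\frac1N \sum_{n=1}^N (Z^n)^2 \to \E^{\nu}\edg{(Z^1)^2}$ and $U^{\nu}_N \to U^{\nu}$ $\p^{\nu}$-a.s., whence $v^{U,\nu}_{N} \to \E^{\nu}\edg{(Z^1)^2} - (U^{\nu})^2 = \sigma^2$.

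The third step combines these. By the classical central limit theorem, $\sqrt N\,(U^{\nu}_N - U^{\nu})$ converges in distribution under $\p^{\nu}$ to a centered normal law of variance $\sigma^2$; dividing by $\sqrt{v^{U,\nu}_{N}}$ (which is $\p^{\nu}$-a.s.\ positive for $N$ large, by Step~2) and invoking Step~2 again, Slutsky's lemma shows that $W_N := \sqrt{N / v^{U,\nu}_{N}}\,(U^{\nu}_N - U^{\nu})$ converges in distribution to a standard normal variable $Z$. The event in \eqref{conub} coincides with $\crl{\abs{W_N} \le q_{1-\alpha}} = \crl{W_N \in [-q_{1-\alpha}, q_{1-\alpha}]}$, and the boundary $\{-q_{1-\alpha}, q_{1-\alpha}\}$ of this fixed interval is a null set for the (continuous) law of $Z$, so the portmanteau theorem upgrades the convergence in distribution to
\[
\lim_{N \to \infty} \p^{\nu}\edg{\abs{U^{\nu} - U^{\nu}_N} \le q_{1 - \alpha}\sqrt{\frac{v^{U,\nu}_{N}}{N}}} = \p\edg{\abs{Z} \le q_{1-\alpha}}.
\]
By the defining relation $\Phi(q_{1-\alpha}) = 1 - \alpha$ for the standard normal distribution function $\Phi$, the right-hand side equals $2\Phi(q_{1-\alpha}) - 1 = 1 - 2\alpha$ when $q_{1-\alpha} \ge 0$ and equals $0 \ge 1 - 2\alpha$ otherwise; in either case it is at least $1 - 2\alpha$, which gives \eqref{conub} (with the $\liminf$ in fact a genuine limit).

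There is no real obstacle here: the statement is the textbook CLT-based confidence interval, and the $\liminf$ is only a conservative way of recording the limit and of subsuming the degenerate case $\sigma^2 = 0$. The two places that call for a little care are the moment bookkeeping — the fourth-moment estimate that makes $\Var^{\nu}(Z^1)$ finite, which is exactly where the hypotheses $\E^{\nu} Y^4 < \infty$ and $\E^{\nu} \abs{\hat f(X)}^4 < \infty$ enter — and the consistency of $v^{U,\nu}_{N}$, which again relies on those moment bounds through the strong law of large numbers; everything else is Slutsky's lemma and the continuity of $\Phi$.
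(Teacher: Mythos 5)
Your proof is correct and follows essentially the same route as the paper's: isolate the degenerate case, use the strong law of large numbers for the consistency of $v^{U,\nu}_N$, then apply the central limit theorem together with Slutsky's lemma. If anything your version is slightly tidier, since you split on $\Var^{\p^{\nu}}\brak{(Y-\hat f(X))^2}=0$ rather than on $Y=\hat f(X)$ $\p^{\nu}$-a.s., make the fourth-moment bookkeeping explicit, and track the case $q_{1-\alpha}<0$ (which is precisely the regime $\alpha\in(1/2,1)$ of the statement, where the bound $1-2\alpha$ is negative and the inequality is trivial).
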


\begin{proof}
In the special case where $Y = \hat{f}(X)$ $\p^{\nu}$-almost surely, one has 
$U^{\nu} = U^{\nu}_N = v^{U,\nu}_N = 0$ \, $\p^{\nu}$-almost surely for all $N \ge 1$. So
$\eqref{conub}$ holds trivially. On the other hand, if $\p^{\nu}[Y \neq \hat{f}(X)] > 0$, it
follows from the assumptions and the strong law of large numbers that $v^{U,\nu}_N$ 
converges $\p^{\nu}$-almost surely to $\Var^{\p^{\nu}} \brak{ (Y - \hat{f}(X))^2} > 0$ for $N \to \infty.$ Therefore, 
one obtains from the central limit theorem and Slutky's theorem that 
\[
\lim_{N \to \infty} \p^{\nu} \edg{\sqrt{\frac{N}{v^{U,\nu}_N}} \, 
\abs{U^{\nu} - U^{\nu}_{N}} \le q_{1-\alpha}} = 1- 2\alpha,
\] 
which shows \eqref{conub}.
\end{proof}

\section{Error estimates}
\label{sec:est}

Now, our goal is to derive bounds on the approximation error $\hat{f} - \bar{f}$ for a
given candidate regression function $\hat{f} \colon \R^d \to \R$. To do that
we assume in this section that $Y$ has a representation of the form:
\[
{\bf (R)} \qquad 
\begin{aligned}
& Y = h(X,V) \; \mbox{\sl  for a Borel function $h \colon \R^{d+k} \to \R$ and a 
$k$-dimensional}\\[-0.8mm]
& \mbox{\sl random vector $V$ that is independent of $X$ under $\p^{\nu}$.}
\end{aligned}
\]

\begin{Remark}
Provided that the probability space $(\Omega, {\cal F}, \p^{\nu})$ is rich enough,
$Y$ can always be assumed to be of the form (R). Indeed, if $(\Omega, {\cal F}, \p^{\nu})$
supports a random variable $V$ which, under $\p^{\nu}$, is uniformly distributed on 
the unit interval $(0,1)$ and independent of $X$, the function 
$h \colon \R^d \times (0,1) \to \R$ can be chosen as 
a conditional $\p^{\nu}$-quantile function of $Y$ given $X$ and extended to the rest of $\R^{d+1}$ arbitrarily. 
Then $(X,h(X,V))$ has the same $\p^{\nu}$-distribution as $(X,Y)$, and, in particular,
\[
\E^{\nu} [h(X,V) \mid X] = \bar{f}(X) \quad \mbox{$\nu$-almost surely.}
\]
However, for our method to be applicable, the function $h$ needs to be known explicitly.
\end{Remark}

A representation of the form (R) with a known function $h$ is
available in computational problems involving numerical regressions, such as 
regression methods to solve PDEs and BSDEs 
\citep[see, e.g.,][]{bally1997approximation, chevance1997numerical, BT, gobet2005regression, GobetT, FTW, beck2019deep}, SPDEs \citep[see, e.g.,][]{beck2020deep},
financial valuation problems \citep[see, e.g.][]{carriere96,longstaff2001valuing, tsitsiklis2001regression,broadie2004stochastic, broadie2008,becker2020pricing}
or financial risk management problems \citep[see, e.g.][]{lee2003computing, gordy2010nested, 
broadie2011efficient, BauerReussSinger, Cher}.

\subsection{Alternative representation of the minimal mean squared distance}
\label{sec:altrep}

The key ingredient of our approach is an alternative representation of the
minimal mean squared distance 
\be \label{Dnu} D^{\nu} = \min_{f \in {\cal B}(\R^d ; \, \R)} 
\E^{\nu} \edg{\brak{Y - f(X)}^2} = \E^{\nu} \edg{\brak{Y - \bar{f}(X)}^2}
\ee which does not involve 
a minimization problem or require knowledge of the true regression function $\bar{f}$
and, at the same time, can be approximated efficiently. An analogous representation 
exists for the squared $L^2$-norm of the conditional expectation 
\be \label{Cnu}
C^{\nu} := \E^{\nu} \, \bar{f}^2(X) = \no{ [\E^{\nu}[Y \mid X] }^2_{L^2(\p^{\nu})},
\ee
which will be helpful in the computation of relative approximation errors in 
Section \ref{sec:ex} below. If necessary, by enlarging\footnote{If assumption (R) holds,
e.g. the product space $(\Omega \times \Omega, {\cal F} \otimes {\cal F}, \p^{\nu} \otimes \p^{\nu})$
supports, next to $X$ and $V$, an independent copy $\tilde{V}$ of $V$.} the probability space
$(\Omega, {\cal F}, \p^{\nu}$), we can assume it supports a 
$k$-dimensional random vector $\tilde{V}$ that has the same 
$\p^{\nu}$-distribution as $V$ and is independent of $(X,V)$ under $\p^{\nu}$.
Let us define
 \[
Z := h(X,\tilde{V}).
\] 
Then, we have the following.

\begin{proposition} \label{prop:rep}
If $\E^{\nu} \, Y^2 < \infty$, then
\[
C^{\nu} = \E^{\nu} \edg{Y Z} \quad \mbox{and} \quad
D^{\nu} = \E^{\nu} \edg{Y (Y- Z)}.
\]
\end{proposition}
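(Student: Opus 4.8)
The plan is to exploit the conditional independence structure in assumption (R): since $\tilde V$ has the same $\p^\nu$-distribution as $V$ and is independent of $(X,V)$, the pair $(X,\tilde V)$ has the same $\p^\nu$-distribution as $(X,V)$, so $Z=h(X,\tilde V)$ has the same conditional distribution given $X$ as $Y=h(X,V)$. In particular $\E^\nu[Z\mid X]=\bar f(X)$ $\nu$-almost surely, and, crucially, $Y$ and $Z$ are conditionally independent given $X$.

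First I would establish integrability: from $\E^\nu Y^2<\infty$ and the equality of conditional distributions we get $\E^\nu Z^2<\infty$, and then $\E^\nu|YZ|<\infty$ by Cauchy--Schwarz, so all the expectations in the statement are well defined. Next, for the first identity, I would condition on $X$ and use the conditional independence of $Y$ and $Z$: $\E^\nu[YZ\mid X]=\E^\nu[Y\mid X]\,\E^\nu[Z\mid X]=\bar f(X)\,\bar f(X)=\bar f^2(X)$ $\nu$-almost surely. Taking $\E^\nu$ on both sides and recalling the definition \eqref{Cnu} of $C^\nu$ gives $C^\nu=\E^\nu[YZ]$. For the second identity I would simply combine this with the well-known orthogonality decomposition $D^\nu=\E^\nu[(Y-\bar f(X))^2]=\E^\nu Y^2-\E^\nu\bar f^2(X)=\E^\nu Y^2-C^\nu$ (which follows since $Y-\bar f(X)$ is orthogonal in $L^2(\p^\nu)$ to every $\sigma(X)$-measurable square-integrable function, in particular to $\bar f(X)$). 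Then $D^\nu=\E^\nu Y^2-\E^\nu[YZ]=\E^\nu[Y(Y-Z)]$, as claimed.

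The only genuinely delicate point is the justification that $Y$ and $Z$ are conditionally independent given $X$ and that $\E^\nu[Z\mid X]=\bar f(X)$; everything else is routine. This hinges on the fact that a regular conditional distribution of $(Y,Z)$ given $X=x$ is the product $\mu_x\otimes\mu_x$, where $\mu_x$ is the pushforward of the law of $V$ under $v\mapsto h(x,v)$ — a consequence of $V$ and $\tilde V$ being i.i.d.\ and jointly independent of $X$. I would verify this by a standard functional/monotone-class argument: for bounded measurable $\varphi,\psi$, $\E^\nu[\varphi(Y)\psi(Z)g(X)]=\E^\nu[\varphi(h(X,V))\psi(h(X,\tilde V))g(X)]$, and since $(V,\tilde V)$ is independent of $X$ with product law, Fubini lets one integrate out $V$ and $\tilde V$ separately, yielding $\E^\nu[\,(\E^\nu[\varphi(Y)\mid X])(\E^\nu[\psi(Z)\mid X])g(X)\,]$. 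This identifies the conditional law and gives both the factorization of $\E^\nu[YZ\mid X]$ and, taking $\psi=\mathrm{id}$ suitably truncated, the identity $\E^\nu[Z\mid X]=\E^\nu[Y\mid X]=\bar f(X)$.
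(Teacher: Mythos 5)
Your proposal is correct and follows essentially the same route as the paper: the tower property together with the independence of $X$, $V$, $\tilde V$ gives $\E^{\nu}[YZ]=\E^{\nu}[\bar f^2(X)]=C^{\nu}$, and the second identity reduces to the same algebraic use of $\E^{\nu}[Y\bar f(X)]=\E^{\nu}[\bar f^2(X)]$, merely phrased via the Pythagoras/orthogonality decomposition $D^{\nu}=\E^{\nu}Y^2-C^{\nu}$ rather than by expanding $\E^{\nu}[Y(Y-Z)]$ directly. The extra detail you supply (integrability via Cauchy--Schwarz and the monotone-class justification of the conditional factorization) is a fuller write-up of what the paper leaves implicit, not a different argument.
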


\begin{proof}
It follows from independence of $X$, $V$ and $\tilde{V}$ that
\[
\E^{\nu} \edg{Y Z} = \E^{\nu} \edg{\E^{\nu} \edg{h(X,V) h(X,\tilde{V}) \mid X}} =  
\E^{\nu} \edg{\bar{f}^2(X)} = C^{\nu}.
\]
Similarly, one has
\[
\E^{\nu} \edg{Y \bar{f}(X)} = \E^{\nu} \edg{\E^{\nu}[Y \mid X] \bar{f}(X) } = \E^{\nu} \edg{\bar{f}^2(X)},
\] 
from which one obtains
\[
\E^{\nu} \edg{Y (Y-Z)} = \E^{\nu} \edg{Y^2 - \bar{f}^2(X)} 
= \E^{\nu} \edg{Y^2 - 2 Y \bar{f}(X) + \bar{f}^2(X)} 
= \E^{\nu} \edg{\brak{Y - \bar{f}(X)}^2} = D^{\nu}.
\]
\end{proof}

\subsection{Approximation of $C^{\nu}$ and $D^{\nu}$}
\label{sec:apprCD}

To approximate $C^{\nu}$ and $D^{\nu}$, we use $\p^{\nu}$-realizations
$Z^n := h(X^n, \tilde{V}^n)$, $n = 1, \dots, N$, of $Z$
based on independent copies $\tilde{V}^n$ of $V$ drawn independently of
$(\mathcal{X}^m, \mathcal{Y}^m)$, $m = 1, \dots, M$, and $(X^n, Y^n, V^n)$, $n = 1, \dots, N$.
The corresponding Monte Carlo approximations of $C^{\nu}$ and $D^{\nu}$ are
\be \label{CDnuN}
C^{\nu}_N := \frac{1}{N} \sum_{n=1}^{N} Y^n Z^n
\quad \mbox{and} \quad
D^{\nu}_N := \frac{1}{N} \sum_{n=1}^{N} Y^n (Y^n- Z^n),
\ee
respectively. If $\E^{\nu} \, Y^2 < \infty$, then $\E^{\nu} \, Z^2 < \infty$ too, and 
one obtains from the strong law of large numbers that \[
\lim_{N \to \infty} C^{\nu}_N = C^{\nu} \quad \mbox{and} \quad
\lim_{N \to \infty} D^{\nu}_N = D^{\nu} \quad \p^{\nu}\mbox{-almost surely.}
\] Moreover,
for the sample variances
\[ 
v^{C,\nu}_N := \frac{1}{N-1} \sum_{n=1}^{N}
\brak{Y^n Z^n - C^{\nu}_N}^2 \quad \mbox{and} \quad
v^{D,\nu}_N := \frac{1}{N-1} \sum_{n=1}^{N}
\brak{Y^n (Y^n-Z^n) - D^{\nu}_N}^2,
\]
the following analog of Lemma \ref{lemma:ciU} holds.

\begin{lemma} \label{lemma:ciD}
If $\E^{\nu} \, Y^4 < \infty$, then, for every $\alpha \in (1/2,1)$, 
\be \label{conc}
\liminf_{N \to \infty} \p^{\nu} \edg{\abs{C^{\nu} - C^{\nu}_N} \le 
 q_{1 - \alpha}\sqrt{\frac{v^{C,\nu}_{N}}{N}} \; } 
\ge 1 - 2 \alpha \ee
and
\be \label{cond}
\liminf_{N \to \infty} \p^{\nu} \edg{\abs{D^{\nu} - D^{\nu}_N} \le 
 q_{1 - \alpha}\sqrt{\frac{v^{D,\nu}_{N}}{N}} \; } 
\ge 1 - 2 \alpha.
\ee
\end{lemma}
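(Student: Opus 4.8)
The plan is to apply, twice, exactly the argument already used for Lemma~\ref{lemma:ciU}: once to the i.i.d.\ sequence $(Y^n Z^n)_{n\ge 1}$, whose $\p^\nu$-mean is $C^\nu$ by Proposition~\ref{prop:rep}, in order to obtain \eqref{conc}, and once to the i.i.d.\ sequence $\big(Y^n(Y^n-Z^n)\big)_{n\ge 1} = \big((Y^n)^2 - Y^n Z^n\big)_{n\ge 1}$, whose $\p^\nu$-mean is $D^\nu$, in order to obtain \eqref{cond}. The first thing to verify is that the hypothesis $\E^\nu\,Y^4 < \infty$ delivers the finite second moments needed for the central limit theorem. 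Since $\tilde V$ has the same $\p^\nu$-law as $V$ and is independent of $(X,V)$, the pair $(X,\tilde V)$ has the same $\p^\nu$-law as $(X,V)$, so $Z = h(X,\tilde V)$ has the same $\p^\nu$-law as $Y = h(X,V)$ and in particular $\E^\nu\,Z^4 = \E^\nu\,Y^4 < \infty$. By the Cauchy--Schwarz inequality, $\E^\nu\,(YZ)^2 \le (\E^\nu\,Y^4)^{1/2}(\E^\nu\,Z^4)^{1/2} < \infty$, and hence also $\E^\nu\big((Y^2-YZ)^2\big) \le 2\,\E^\nu\,Y^4 + 2\,\E^\nu\,(YZ)^2 < \infty$; thus $\Var^{\p^\nu}(YZ)$ and $\Var^{\p^\nu}\big(Y(Y-Z)\big)$ are both finite.

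As in the proof of Lemma~\ref{lemma:ciU}, I would then split into two cases. If $\Var^{\p^\nu}(YZ) = 0$, then $Y^n Z^n = C^\nu$ holds $\p^\nu$-a.s.\ for every $n$, so $C^\nu_N = C^\nu$ and $v^{C,\nu}_N = 0$ $\p^\nu$-a.s.\ for all $N$, and \eqref{conc} holds trivially; the case $\Var^{\p^\nu}\big(Y(Y-Z)\big) = 0$ is handled in the same way for \eqref{cond}. If instead $\Var^{\p^\nu}(YZ) > 0$, the strong law of large numbers gives $v^{C,\nu}_N \to \Var^{\p^\nu}(YZ) > 0$ $\p^\nu$-a.s., the central limit theorem gives that $\sqrt{N}\,(C^\nu_N - C^\nu)$ converges in $\p^\nu$-distribution to a centred Gaussian with variance $\Var^{\p^\nu}(YZ)$, and Slutsky's theorem then yields $\sqrt{N/v^{C,\nu}_N}\,\abs{C^\nu - C^\nu_N} \Rightarrow \abs{\mathcal N(0,1)}$, so that $\p^\nu\big[\sqrt{N/v^{C,\nu}_N}\,\abs{C^\nu - C^\nu_N} \le q_{1-\alpha}\big] \to 1 - 2\alpha$, which is \eqref{conc}. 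Replacing $C$ by $D$ and $Y^n Z^n$ by $Y^n(Y^n - Z^n)$ throughout gives \eqref{cond}.

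I do not expect a real obstacle, since this is essentially a re-run of the proof of Lemma~\ref{lemma:ciU}. The only two points that need genuine attention are the moment bookkeeping in the first paragraph --- observing that $Z$ and $Y$ have the same $\p^\nu$-law, so that the single assumption $\E^\nu\,Y^4 < \infty$ simultaneously controls $\Var^{\p^\nu}(YZ)$ and $\Var^{\p^\nu}\big(Y(Y-Z)\big)$ --- and keeping the degenerate case ($\Var^{\p^\nu} = 0$) separate so that Slutsky's theorem is applied only where the limiting variance is strictly positive.
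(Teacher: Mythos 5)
Your proposal is correct and follows essentially the same route as the paper's own proof: the degenerate case is handled trivially, and otherwise the strong law of large numbers, the central limit theorem and Slutsky's theorem give the asymptotic confidence statement, applied separately to the sequences $(Y^nZ^n)$ and $(Y^n(Y^n-Z^n))$. Your explicit moment bookkeeping (noting that $Z$ has the same $\p^{\nu}$-law as $Y$ and invoking Cauchy--Schwarz) is a detail the paper leaves implicit, but it is the same argument.
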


\begin{proof}
If $C^{\nu} = Y Z$ $\p^{\nu}$-almost surely, then 
$C^{\nu} - C^{\nu}_N = v^{C,\nu}_N = 0$ $\p^{\nu}$-almost surely for all $N \ge 1$, and
$\eqref{conc}$ is immediate. On the other hand, if $\p^{\nu}[C^{\nu} \neq YZ] > 0$, 
one obtains from the strong law of large numbers that 
$v^{C,\nu}_N \to \Var^{\p{\nu}} \brak{YZ} > 0$
$\p^{\nu}$-almost surely for $N \to \infty$, and it follows 
from the central limit theorem together with Slutky's theorem that 
\[
\lim_{N \to \infty} \p^{\nu} \edg{\sqrt{\frac{N}{v^{C,\nu}_N}} \, 
\abs{C^{\nu} - C^{\nu}_{N}} \le q_{1-\alpha}} = 1- 2\alpha.
\] 
This shows \eqref{conc}. \eqref{cond} follows analogously.
\end{proof}

\subsection{$L^2$-bounds on the approximation error}
\label{sec:L2est}

We now derive $L^2$-bounds on the error resulting from approximating the 
true regression function $\bar{f}$ with a candidate regression function $\hat{f}$.
Let us denote by $L^2(\nu)$ the space of all Borel functions $f \colon \R^d \to \R$ 
satisfying 
\[
\n{f}^2_{L^2(\nu)} := \E^{\nu} f^2(X) = \int_{\R^d} f^2(x) d \nu(x) < \infty
\]
and consider the squared $L^2(\nu)$-norm of the approximation error
\be \label{Fnu}
F^{\nu} := \n{\hat{f} - \bar{f}}^2_{L^2(\nu)} = \E^{\nu} \edg{\brak{\hat{f}(X) - \bar{f}(X)}^2}.
\ee
$F^{\nu}$ has the following alternative representation.

\begin{theorem} \label{thm:error}
If $\E^{\nu} \, Y^2 < \infty$ and $\E^{\nu} \hat{f}^2(X) < \infty$, then
\be \label{F}
F^{\nu} = \E^{\nu} \edg{YZ + \hat{f}(X) \brak{\hat{f}(X) - Y - Z}}.
\ee
\end{theorem}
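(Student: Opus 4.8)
The plan is to expand the square defining $F^\nu$ and then replace the two terms that involve the unknown regression function $\bar f$ using the identities already established in Proposition~\ref{prop:rep}. Concretely, since $\E^\nu \hat f^2(X) < \infty$ and $\E^\nu Y^2 < \infty$ (so that $\E^\nu \bar f^2(X) \le \E^\nu Y^2 < \infty$ by conditional Jensen, and $\E^\nu Z^2 = \E^\nu Y^2 < \infty$), all the products below are $\p^\nu$-integrable and the manipulations are justified. Expanding,
\[
F^\nu = \E^\nu\edg{\brak{\hat f(X) - \bar f(X)}^2}
= \E^\nu\edg{\hat f^2(X)} - 2\,\E^\nu\edg{\hat f(X)\bar f(X)} + \E^\nu\edg{\bar f^2(X)}.
\]

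Next I would deal with the two terms containing $\bar f$. For the cross term, I use the tower property and the fact that $\hat f(X)$ is $\sigma(X)$-measurable to write $\E^\nu[\hat f(X)\bar f(X)] = \E^\nu[\hat f(X)\,\E^\nu[Y\mid X]] = \E^\nu[\hat f(X) Y]$; equivalently one can note $\E^\nu[\hat f(X) Z] = \E^\nu[\hat f(X)\,\E^\nu[h(X,\tilde V)\mid X]] = \E^\nu[\hat f(X)\bar f(X)]$, so $\E^\nu[\hat f(X)\bar f(X)] = \E^\nu[\hat f(X) Y] = \E^\nu[\hat f(X) Z]$. For the last term, Proposition~\ref{prop:rep} gives directly $\E^\nu[\bar f^2(X)] = C^\nu = \E^\nu[YZ]$. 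Substituting these, say using $\E^\nu[\hat f(X) Y]$ for the cross term,
\[
F^\nu = \E^\nu\edg{\hat f^2(X)} - 2\,\E^\nu\edg{\hat f(X) Y} + \E^\nu\edg{YZ}
= \E^\nu\edg{YZ + \hat f(X)\brak{\hat f(X) - Y - Z}},
\]
where in the last step I regroup $\hat f^2(X) - 2\hat f(X) Y = \hat f(X)(\hat f(X) - Y - Z) + \hat f(X) Z$ and absorb the extra $\hat f(X) Z$ by replacing one of the two $\hat f(X) Y$ terms with $\hat f(X) Z$ (legitimate since they have the same expectation). This yields exactly \eqref{F}.

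There is no serious obstacle here; the proof is a short computation. The only point requiring a moment's care is the bookkeeping in the final regrouping: the representation \eqref{F} is not the ``symmetric'' one $\E^\nu[\hat f^2(X) - 2\hat f(X)Y + YZ]$ but an asymmetric rearrangement that splits the factor of $2$ between a $Y$ and a $Z$, and one must check that this rearrangement is valid precisely because $\E^\nu[\hat f(X)Y] = \E^\nu[\hat f(X)Z]$. The motivation for presenting it in the form \eqref{F} is presumably that $\hat f(X)(\hat f(X) - Y - Z)$ together with $YZ$ gives an expression whose Monte Carlo estimator reuses the same samples $(X^n, Y^n, Z^n)$ already introduced for $C^\nu_N$ and $D^\nu_N$ in \eqref{CDnuN}; I would mention this briefly after the proof. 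Finally, I would note that integrability of every term follows from $\E^\nu Y^2 < \infty$, $\E^\nu Z^2 = \E^\nu Y^2 < \infty$, and $\E^\nu \hat f^2(X) < \infty$ via the Cauchy--Schwarz inequality.
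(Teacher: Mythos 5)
Your proof is correct and is essentially the paper's argument in a slightly reorganized form: the paper passes through Pythagoras' identity $F^{\nu} = \no{Y-\hat f(X)}^2_{L^2(\p^{\nu})} - \no{Y-\bar f(X)}^2_{L^2(\p^{\nu})}$ and the $D^{\nu} = \E^{\nu}[Y(Y-Z)]$ half of Proposition \ref{prop:rep}, whereas you expand $(\hat f(X)-\bar f(X))^2$ directly and use the $C^{\nu} = \E^{\nu}[YZ]$ half, but both reduce to the same two ingredients, namely Proposition \ref{prop:rep} and the identity $\E^{\nu}[Y\hat f(X)] = \E^{\nu}[Z\hat f(X)]$. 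Your integrability remarks and the final asymmetric regrouping are handled correctly.
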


\begin{proof}
Since $\bar{f}(X) = \E^{\nu}[Y \mid X]$, it follows from $\E^{\nu} \, Y^2 < \infty$ 
and the conditional Jensen inequality that $\E^{\nu} \bar{f}^2(X) < \infty$ as well. Furthermore,
$Y - \bar{f}(X)$ is orthogonal to $\hat{f}(X) - \bar{f}(X)$ in $L^2(\p^{\nu})$. 
Therefore, one obtains from Pythagoras' theorem that
\be \label{Pyth}
F^{\nu} = \no{\hat{f}(X) - \bar{f}(X)}^2_{L^2(\p^{\nu})}
= \no{Y - \hat{f}(X)}^2_{L^2(\p^{\nu})} - \no{Y - \bar{f}(X)}^2_{L^2(\p^{\nu})}.
\ee
In addition, we know from Proposition \ref{prop:rep} that
\be \label{prop}
\no{Y - \bar{f}(X)}^2_{L^2(\p^{\nu})} = \E^{\nu} \edg{\brak{Y - \bar{f}(X)}^2} = \E^{\nu}[Y(Y-Z)].
\ee
So, since
\[
\E^{\nu} \edg{Y \hat{f}(X)} = \E^{\nu} \edg{\E^{\nu}[h(X,V) \mid X] \, \hat{f}(X)} 
= \E^{\nu} \edg{\E^{\nu}[h(X,\tilde{V}) \mid X] \, \hat{f}(X)} = \E^{\nu} \edg{Z \hat{f}(X)},
\]
we obtain from \eqref{Pyth} und \eqref{prop} that 
\[
F^{\nu} = \E^{\nu} \edg{Y^2 - (Y+Z) \hat{f}(X) + \hat{f}^2(X)- Y(Y-Z)} = 
\E^{\nu} \edg{YZ + \hat{f}(X) \brak{\hat{f}(X) - Y - Z}},
\]
which shows \eqref{F}.
\end{proof}

In view of \eqref{F}, we approximate $F^{\nu}$ with the Monte Carlo average
\be
\label{FnuN}
F^{\nu}_N := \frac{1}{N} \sum_{n = 1}^{N} 
\crl{Y^n Z^n + \hat{f}(X^n) \brak{\hat{f}(X^n) - Y^n - Z^n}}
\ee
and denote the corresponding sample variance by
\[
v^{F, \nu}_N := \frac{1}{N-1} \sum_{n=1}^{N} 
\crl{Y^n Z^n + \hat{f}(X^n) \brak{\hat{f}(X^n) - Y^n - Z^n} - F^{\nu}_N}^2.
\]
The following lemma provides approximate confidence upper bounds for the true 
squared $L^2$-approximation error \eqref{Fnu}.

\begin{lemma} \label{lemma:Fci}
If $\E^{\nu} \, Y^4 < \infty$ and $\E^{\nu} \hat{f}^4(X) < \infty$, one has for all $\alpha \in (0,1)$,
\be \label{ciF}
\liminf_{N \to \infty}
\p^{\nu} \edg{F^{\nu} \le F^{\nu}_N + q_{\alpha} \, \sqrt{\frac{v^{F,\nu}_N}{N}} \;}
\ge \alpha.
\ee
\end{lemma}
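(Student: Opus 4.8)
The plan is to follow the same one-sided central limit theorem argument used in Lemmas~\ref{lemma:ciU} and~\ref{lemma:ciD}, the only difference being that we now want a one-sided rather than a two-sided confidence statement, which is why the conclusion holds for all $\alpha \in (0,1)$ rather than only $\alpha \in (1/2,1)$. Write $W^n := Y^n Z^n + \hat{f}(X^n)\brak{\hat{f}(X^n) - Y^n - Z^n}$ for $n = 1, \dots, N$, so that $F^{\nu}_N = \frac{1}{N}\sum_{n=1}^N W^n$ and, by Theorem~\ref{thm:error}, $\E^{\nu} W^n = F^{\nu}$. First I would check that the moment hypotheses $\E^{\nu} Y^4 < \infty$ and $\E^{\nu}\hat{f}^4(X) < \infty$ (together with $\E^{\nu} Z^4 = \E^{\nu} Y^4 < \infty$, since $Z$ has the same $\p^{\nu}$-distribution as $Y$) guarantee via Cauchy--Schwarz that $\E^{\nu} (W^n)^2 < \infty$; hence $\Var^{\p^{\nu}}(W^1)$ is finite.

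The argument then splits into two cases, exactly as before. If $W^1 = F^{\nu}$ holds $\p^{\nu}$-almost surely, then $F^{\nu}_N = F^{\nu}$ and $v^{F,\nu}_N = 0$ $\p^{\nu}$-almost surely for every $N \ge 1$, so the event inside \eqref{ciF} is all of $\Omega$ and the bound holds trivially with probability $1 \ge \alpha$. Otherwise $\sigma^2 := \Var^{\p^{\nu}}(W^1) > 0$, and the strong law of large numbers applied to $(W^n)^2$ and to $W^n$ gives $v^{F,\nu}_N \to \sigma^2$ $\p^{\nu}$-almost surely as $N \to \infty$. The central limit theorem gives $\sqrt{N}\,(F^{\nu}_N - F^{\nu})/\sigma \Rightarrow \mathcal{N}(0,1)$, and Slutsky's theorem lets us replace $\sigma$ by $\sqrt{v^{F,\nu}_N}$, so that
\[
\lim_{N \to \infty} \p^{\nu}\edg{ \sqrt{\frac{N}{v^{F,\nu}_N}}\,\brak{F^{\nu} - F^{\nu}_N} \le q_{\alpha} } = \alpha
\]
for every $\alpha \in (0,1)$, since $q_\alpha$ is a continuity point of the standard normal distribution function. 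Rearranging the inequality inside the probability to $F^{\nu} \le F^{\nu}_N + q_\alpha \sqrt{v^{F,\nu}_N / N}$ yields \eqref{ciF} (in fact with equality of the limit, so the $\liminf$ is not even needed in this case).

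There is no real obstacle here; the lemma is a routine adaptation of Lemma~\ref{lemma:ciD}, and the only point worth a moment's care is bookkeeping the moment conditions to see that $W^1$ has finite variance so that the CLT applies. The shift from a two-sided to a one-sided statement is what removes the restriction $\alpha > 1/2$: with the two-sided bounds one needed $1 - 2\alpha \ge 0$ for the statement to be non-vacuous, whereas a one-sided upper confidence bound at level $\alpha$ makes sense for any $\alpha \in (0,1)$, including small $\alpha$ (giving a loose but valid bound) and $\alpha$ close to $1$ (giving a tight bound). I would also remark, as the paper does implicitly, that the degenerate case $\p^{\nu}[W^1 = F^{\nu}] = 1$ in particular covers the situation $Y = \hat f(X)$ $\p^\nu$-a.s., where $F^\nu = 0$.
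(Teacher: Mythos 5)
Your proof is correct and follows essentially the same route as the paper's: the same split into the degenerate case and the non-degenerate case, followed by the strong law of large numbers for the sample variance, the central limit theorem and Slutsky's theorem, and a rearrangement of the one-sided inequality. Your additional bookkeeping of the fourth-moment hypotheses (via Cauchy--Schwarz and the fact that $Z$ has the same $\p^{\nu}$-distribution as $Y$) and the remark on why $\alpha$ ranges over all of $(0,1)$ are accurate but do not change the argument.
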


\begin{proof}
In the special case, where
\[
YZ + \hat{f}(X)(\hat{f}(X) - Y - Z) = F^{\nu} \quad \mbox{$\p^{\nu}$-almost surely,}
\]
one has
\[
F^{\nu} - F^{\nu}_N = v^{F,\nu}_N = 0 \quad 
\mbox{$\p^{\nu}$-almost surely for all } N \ge 1,
\]
and \eqref{ciF} is clear. On the other hand, if 
\[
\p^{\nu} \edg{YZ + \hat{f}(X)(\hat{f}(X) - Y - Z) \neq F^{\nu}} > 0,
\]
it follows from the strong law of large numbers that
\[
\lim_{N \to \infty} v^{F,\nu}_N = {\rm Var}^{\p^{\nu}} 
\brak{YZ + \hat{f}(X)(\hat{f}(X) - Y - Z} > 0 \quad \mbox{$\p^{\nu}$-almost surely.}
\]
So, one obtains from the central limit theorem and Slutky's theorem that
\[
\liminf_{N \to \infty} \p^{\nu} \edg{\sqrt{\frac{N}{v^{F,\nu}_N}} \brak{F^{\nu} - F^{\nu}_N} \le
q_{\alpha}} = \alpha,
\]
which implies \eqref{ciF}.
\end{proof}

In applications where $\bar{f}$ needs to be approximated well at a 
given point $x_0$ in the support of the distribution $\nu_X$ of $X$, 
$\nu_X$ can be distorted so as to obtain a probability measure 
$\nu \ll \nu_X$ on $\R^d$ that concentrates more weight around $x_0$. Then, 
provided that $\hat{f} - \bar{f}$ is continuous at $x_0$, $\n{\hat{f} - \bar{f}}_{L^2(\nu)}$ 
approximates the point-wise difference $|\hat{f}(x_0) - \bar{f}(x_0)|$. More precisely, if 
$\n{.}_2$ denotes the standard Euclidean norm on $\R^d$, the following holds.

\begin{lemma} \label{lemma:point}
Assume $\E \, Y^2 < \infty$, $\E \, \hat{f}^2(X) < \infty$ and $\hat{f} - \bar{f}$ 
is continuous at a point $x_0 \in \R^d$. Let
$(\nu_n)_{n \ge 1}$ be a sequence of Borel
probability measures on $\R^d$ given by $d \nu_n/d\nu_X = p_n$ for a sequence 
of Borel functions $p_n \colon \R^d \to [0,\infty)$ satisfying
\[
\int_{\R^d} p_n(x) d\nu_X(x) = 1 \; \mbox{ for all } n \ge 1 \quad \mbox{and} \quad
\sup_{x \in \R^d \, : \, \n{x - x_0}_2 > 1/n} p_n(x) \to 0 \; \mbox{ for $n \to \infty$}.
\]
Then 
\[
\lim_{n \to \infty} \n{\hat{f} - \bar{f}}_{L^2(\nu_n)} = |\hat{f}(x_0) - \bar{f}(x_0)|.
\]
\end{lemma}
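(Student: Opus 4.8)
The plan is to reduce everything to an approximate-identity statement for the function $g := \hat{f} - \bar{f}$ and the density $p_n = d\nu_n/d\nu_X$. By the definition of $L^2(\nu_n)$ and of $p_n$,
\[
\n{\hat{f} - \bar{f}}^2_{L^2(\nu_n)} = \int_{\R^d} g^2(x)\, p_n(x)\, d\nu_X(x),
\]
so the claim amounts to showing that this integral converges to $g^2(x_0)$. The first step is to record that $g \in L^2(\nu_X)$: we have $\E \hat{f}^2(X) < \infty$ by hypothesis, and $\E \bar{f}^2(X) = \E[(\E[Y\mid X])^2] \le \E[Y^2] < \infty$ by the conditional Jensen inequality exactly as in the proof of Theorem \ref{thm:error}, hence $\n{g}^2_{L^2(\nu_X)} \le 2\E\hat{f}^2(X) + 2\E\bar{f}^2(X) < \infty$. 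This integrability is what will let us control the contribution of the region away from $x_0$, since the densities $p_n$ are not assumed to be uniformly bounded.

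Next I would split the integral at the closed ball $B_n := \{x \in \R^d : \n{x - x_0}_2 \le 1/n\}$. On $B_n^c = \{x : \n{x-x_0}_2 > 1/n\}$,
\[
\int_{B_n^c} g^2 p_n \, d\nu_X \le \Big(\sup_{x \in B_n^c} p_n(x)\Big)\, \n{g}^2_{L^2(\nu_X)} \longrightarrow 0
\]
by the assumption $\sup_{\n{x-x_0}_2 > 1/n} p_n(x) \to 0$. The same estimate with $g^2$ replaced by $1$ gives $\int_{B_n^c} p_n\, d\nu_X \le \sup_{B_n^c} p_n \to 0$, and since $\int_{\R^d} p_n\, d\nu_X = 1$ this yields $\int_{B_n} p_n\, d\nu_X \to 1$.

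For the near region I would use continuity of $g$ at $x_0$: given $\varepsilon > 0$, pick $\delta > 0$ with $|g(x) - g(x_0)| < \varepsilon$ for $\n{x - x_0}_2 < \delta$; then for $n > 1/\delta$ we have $B_n \subseteq \{\n{x-x_0}_2 < \delta\}$, so on $B_n$,
\[
|g^2(x) - g^2(x_0)| = |g(x)-g(x_0)|\,|g(x)+g(x_0)| \le \varepsilon\big(\varepsilon + 2|g(x_0)|\big).
\]
Hence $\big|\int_{B_n} g^2 p_n \, d\nu_X - g^2(x_0)\int_{B_n} p_n \, d\nu_X\big| \le \varepsilon(\varepsilon + 2|g(x_0)|)$ for all large $n$; combined with $g^2(x_0)\int_{B_n} p_n\,d\nu_X \to g^2(x_0)$ and the arbitrariness of $\varepsilon$, this gives $\int_{B_n} g^2 p_n \, d\nu_X \to g^2(x_0)$. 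Adding the two regions yields $\n{g}^2_{L^2(\nu_n)} \to g^2(x_0)$, and taking square roots gives $\n{\hat{f}-\bar{f}}_{L^2(\nu_n)} \to |\hat{f}(x_0) - \bar{f}(x_0)|$.

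There is no genuine obstacle; the one point that needs care is precisely that the $p_n$ need not be bounded, so the tail estimate must rely on $g \in L^2(\nu_X)$ rather than on a crude supremum of $p_n$ over all of $\R^d$. The measurability of $g^2 p_n$ and of the balls $B_n$, and the fact that the normalizing masses $\int_{B_n} p_n\,d\nu_X$ tend to $1$, are all routine.
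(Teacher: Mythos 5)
Your proof is correct and follows essentially the same route as the paper's: split the integral at the ball of radius $1/n$ around $x_0$, use continuity of $\hat f-\bar f$ at $x_0$ on the near region, and control the far region by $\sup_{\n{x-x_0}_2>1/n}p_n$ times the $L^2(\nu_X)$-norm guaranteed by the moment assumptions. The only cosmetic difference is that you compare $g^2$ with $g^2(x_0)$ weighted by the near-region mass, whereas the paper bounds $\n{g-g(x_0)}_{L^2(\nu_n)}$ and finishes with the reverse triangle inequality; both are fine.
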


\begin{proof} 
It follows from $\E \, Y^2 < \infty$ that $\E \, \bar{f}^2(X) < \infty$, which together 
with the condition $\E \, \hat{f}^2(X) < \infty$, implies that $f := \hat{f} - \bar{f} \in L^2(\nu_X)$.
Moreover, one obtains from the assumptions that for every $\varepsilon > 0$, 
there exists an $n \ge 1$ such that
\[
\abs{f(x) - f(x_0)} \le \varepsilon \quad \mbox{for all }
x \in \R^d \mbox{ satisfying } \n{x - x_0}_2 \le 1/n,
\]
and
\[
\int_{\crl{x \in \R^d \, : \, \n{x - x_0}_2 > 1/n}} \brak{f(x) - f(x_0)}^2 d\nu_n(x) 
= \int_{\crl{x \in \R^d \, : \, \n{x - x_0}_2 > 1/n}} \brak{f(x) - f(x_0)}^2 p_n(x) d\nu_X(x) \le \varepsilon^2.
\]
Hence, 
\beas
\n{f - f(x_0)}^2_{L^2(\nu_n)} &=& \int_{\crl{x \in \R^d \, : \, \n{x - x_0}_2 \le 1/n}} \brak{f(x) - f(x_0)}^2 d\nu_n(x)\\
&&+ \int_{\crl{x \in \R^d \, : \, \n{x - x_0}_2 > 1/n}} \brak{f(x) - f(x_0)}^2 d\nu_n(x) \le 2 \varepsilon^2,
\eeas
and therefore,
\[
\abs{\n{f}_{L^2(\nu_n)} - |f(x_0)|} \le \n{f - f(x_0)}_{L^2(\nu_n)} \le \sqrt{2}\, \varepsilon.
\]
Since $\varepsilon > 0$ was arbitrary, this proves the lemma.
\end{proof}

\section{Examples}
\label{sec:ex}

In all our examples we compute a candidate regression function $\hat{f} \colon \R^d \to \R$
by minimizing an empirical mean squared distance of the form \eqref{empD}. 
For comparison reasons we minimize \eqref{empD} over different families 
${\cal S}$ of Borel measurable functions $f \colon \R^d \to \R$, in each case using a numerical method suited to the specific form of ${\cal S}$.

\begin{enumerate}
\item 
First, we use linear regression on $1, X_1, \dots, X_d$. The corresponding 
function family ${\cal S}$ consists of all linear combinations of $1, x_1, \dots, x_d$, 
and the minimization of the empirical mean squared distance \eqref{empD} becomes 
the ordinary least squares problem 
\[
\min_{\beta \in \R^{d+1}} \sum_{m=1}^M \brak{\mathcal{Y}^m - \beta_0 - \sum_{i=1}^d
\beta_i \mathcal{X}^m_i}^2.
\]
This yields a candidate regression function of the form
$\hat{f}(x) = \hat{\beta}_0 + \sum_{i=1}^d \hat{\beta}_i x_i$, where 
$\hat{\beta} \in \R^{d+1}$ is a solution of the normal equation
\be \label{neq}
A^T A \, \hat{\beta} = A^T y
\ee
for $A \in \R^{M \times (d+1)}$ and $y \in \R^M$ given by
\[
A = 
\left(
\begin{array}{cccc}
1 & \mathcal{X}^1_1 & ... & \mathcal{X}^1_d\\
1 & \mathcal{X}^2_1 & ... & \mathcal{X}^2_d\\
... & ... & ... & ...\\
1 & \mathcal{X}^M_1 & ... & \mathcal{X}^M_d
\end{array}
\right) \quad \mbox{and} \quad y = \left(
\begin{array}{cccc}
\mathcal{Y}^1\\
\dots \\
\dots \\
\mathcal{Y}^M
\end{array}
\right).
\]
In Sections \ref{ex:poly} and \ref{ex:non-poly} we use $M = 2 \times 10^6$ 
independent Monte Carlo simulations $(\mathcal{X}^m, \mathcal{Y}^m)$ for the linear regression,
while in Sections \ref{ex:max-call} and \ref{ex:binary}, where the 
examples are higher-dimensional, we use $M = 5 \times 10^5$ of them.
If the matrix $A^T A$ is invertible, equation \eqref{neq} has a unique 
solution given by $\hat{\beta} = (A^T A)^{-1} A^T y$. If $A^T A$ is invertible and,
in addition, well-conditioned, $\hat{\beta}$ can efficiently be computed using the 
Cholesky decomposition $R^TR$ of $A^T A$
to solve $R^T z = A^T y$ and $R \hat{\beta} = z$ in two steps. On the other hand, 
if $A^T A$ is not invertible or ill-conditioned, the Cholesky method is numerically unstable.
In this case, we compute a singular value decomposition $U \Sigma V^T$ of $A$ for orthogonal 
matrices $U \in \R^{M \times M}$,
$V \in \R^{(d+1) \times (d+1)}$ and a diagonal matrix $\Lambda \in \R^{M \times (d+1)}$
with diagonal entries $\lambda_1 \ge \lambda_2 \ge \dots \ge \lambda_{d+1} \ge 0$.
The solution of \eqref{neq} with the smallest Euclidean norm is then given by 
\[
\hat{\beta} = V \left( \begin{array}{ccccc}
\lambda_1^{-1} 1_{\crl{\lambda_1 > 0}} & 0 & \dots & \dots & 0\\
0 & \lambda^{-1}_2 1_{\crl{\lambda_2 > 0}} & \dots & \dots & 0\\
0 & \dots & \dots & \dots & 0\\
0 & \dots & \dots & \dots & \lambda_{d+1}^{-1} 1_{\crl{\lambda_{d+1} > 0}}\\
0 & \dots & \dots & \dots & 0\\
\dots & \dots & \dots & \dots & \dots\\
0 & \dots & \dots & \dots & 0
\end{array} 
\right) U^T y,
\]
which, for numerical stability reasons, we approximate with a
truncated SVD solution
\be \label{pinv}
\hat{\beta}_c = V \left( \begin{array}{ccccc}
\lambda_1^{-1} 1_{\crl{\lambda_1 > c}} & 0 & \dots & \dots & 0\\
0 & \lambda^{-1}_2 1_{\crl{\lambda_2 > c}} & \dots & \dots & 0\\
0 & \dots & \dots & \dots & 0\\
0 & \dots & \dots & \dots & \lambda_{d+1}^{-1} 1_{\crl{\lambda_{d+1} > c}}\\
0 & \dots & \dots & \dots & 0\\
\dots & \dots & \dots & \dots & \dots\\
0 & \dots & \dots & \dots & 0
\end{array} 
\right) U^T y
\ee
for a small cutoff value $c > 0$; see, e.g., \citet{bjorck96}.

\item
As second method we use second order polynomial regression; that is, we
regress on $1, X_1, \dots, X_d$ and all second order terms $X_i X_j$, $1 \le i \le j \le d$. 
${\cal S}$ is then the linear span of $1, x_1, \dots, x_d$ and $x_ix_j$, 
$1 \le i \le j \le d$, and a canditate regression function 
can be computed as in 1. above, except that now the feature matrix $A$ has
$1 + 3d/2 + d^2/2$ columns. As before, we use $M = 2 \times 10^6$ 
independent Monte Carlo simulations $(\mathcal{X}^m, \mathcal{Y}^m)$ in Sections \ref{ex:poly} -- \ref{ex:non-poly} and $M = 5 \times 10^5$ of them in Sections 
\ref{ex:max-call} -- \ref{ex:binary}, and again, 
we use the Cholesky decomposition of $A^T A$ 
to solve \eqref{neq} if $A^T A$ is well-conditioned and a truncated 
SVD solution otherwise\footnote{We used Cholesky decomposition 
for the linear and polynomial regressions without additional feature
in Sections \ref{ex:poly}--\ref{ex:non-poly} and the pseudoinversion \eqref{pinv} 
based on truncated SVD for all other linear and polynomial regressions in Section \ref{sec:ex}.
We computed \eqref{pinv} with a standard pseudoinverse 
command. In most examples the default cutoff value $c$ gave good results. 
In the high-dimensional examples of Sections \ref{ex:max-call}--\ref{ex:binary}
a slightly higher cutoff value $c$ improved the results of the polynomial regressions. 
Alternatively, one could use ridge regression with a suitable penalty parameter
or (stochastic) gradient descent to solve the least squares problem
in cases where $A^T A$ is ill-conditioned.
}.

\item 
In our third method, ${\cal S}$ consists of all neural networks of a given architecture. 
In this paper we focus on feedforward neural networks of the form
\be \label{nn}
f^{\theta} = 
a_{D}^{\theta}\circ \varphi \circ a_{D-1}^{\theta} \circ \dots \circ \varphi \circ a_{1}^{\theta},
\ee
where 
\begin{itemize} 
\item $D \ge 1$ is the depth of the network;
\item $a^{\theta}_{i} : \mathbb{R}^{q_{i-1}} \to \mathbb{R}^{q_{i}}$, $i = 1, \dots, D$, 
are affine transformations of the form $a^{\theta}_{i}(x) = A_ix + b_i$ for matrices 
$A_i \in \mathbb{R}^{q_i \times q_{i-1}}$ 
and vectors $b_i \in \mathbb{R}^{q_i}$, where $q_0$ is the input dimension, $q_D = 1$ 
the output dimension and $q_i$, $i = 1, \dots, D - 1$, the number of neurons in the $i$-th hidden layer;\item 
$\varphi \colon \R \to \R$ is a non-linear activation function applied component-wise in each hidden layer.
\end{itemize}
In the examples below, we use networks of depth $D = 4$ and 128 neurons in each of the 
three hidden layers. We compare the commonly used activation functions $\varphi = \tanh$
and ${\rm ReLU}(x) := \max\{0,x\}$ to the following smooth version of 
${\rm Leaky ReLU}(x) := \max \{\alpha x, x\}$:
\[
{\rm LSE}(x) := \log \brak{e^{\alpha x} + e^x} \quad \mbox{for } \alpha = 0.01,
\]
which is efficient to evaluate numerically and, by the LogSumExp inequality, satisfies
\[
{\rm LeakyReLU}(x) \leq {\rm LSE}(x) \leq {\rm LeakyReLU}(x) + \log{2}.
\]
In addition, ${\rm LSE}$ is everywhere differentiable with non-vanishing derivative, which alleviates the problem of 
vanishing gradients that can arise in the training of $\tanh$ and ${\rm ReLU}$
networks. We initialize the parameter vector $\theta$ according to Xavier initialization
\cite{glorot2010understanding} and then optimize it by iteratively decreasing the empirical 
mean squared distance \eqref{empD} with Adam stochastic gradient descent \cite{kingma2014adam} using mini-batches of size $2^{13}$ and 
batch-normalization\footnote{Note that while the trained network is 
of the form \eqref{nn}, training with batch-normalization decomposes each affine 
transformation into a concatenation $a^{\theta}_i = a^{\theta}_{i,2} \circ a^{\theta}_{i,1}$
for a general affine transformation $a^{\theta}_{i,1} \colon \R^{q_{i-1}} \to \R^{q_i}$
and a batch-normalization transformation $a^{\theta}_{i,2} \colon \R^{q_i} \to \R^{q_i}$, 
both of which are learned from the data. This usually stabilizes the training process but increases 
the number of parameters that need to be learned.}
\cite{IoffeSzegedy} before each activation $\varphi$.
We perform 250,000 gradient steps with standard Adam parameters, except that we start with  
a learning rate of $0.1$, which we manually reduce to $0.05$, $10^{-2}$,
$10^{-3}$, $10^{-4}$, $10^{-5}$ and $10^{-6}$ after 1000, 5000, 25,000, 50,000, 
100,000 and 150,000 iterations, respectively.
To avoid slow cross device communications between 
the CPU and GPU, we generate all 
simulations on the fly during the training procedure.
Since we simulate from a model, we can produce a large training set 
and therefore, do not need to worry about overfitting to the training data.
\end{enumerate}

\begin{Remark}
In many applications, the performance of the numerical regression can be improved with
little additional effort by adding a redundant feature of the form $a(X)$ for a Borel 
measurable function $a \colon \R^d \to \R$ capturing important aspects of the relation between
$X$ and $Y$. For instance, if $Y$ is given by $Y = h(X,V)$ for a Borel function $h \colon \R^{d+k} \to \R$ and a 
$k$-dimensional random vector $V$, adding the additional feature $a(X) = h(X,0)$, or something 
similar, often yields good results. Instead of minimizing the mean squared distance \eqref{D}, 
one then tries to find a Borel function $\hat{g} \colon \R^{d+1} \to \R$ that minimizes
$\E^{\nu} \edg{\brak{Y - \hat{g}(X, a(X))}^2}$ and approximates the regression function
$\bar{f} \colon \R^d \to \R$ with $\hat{f}(x) = \hat{g}(x, a(x))$, $x \in \R^d$.
\end{Remark}

In all examples, we report for all different methods used to determine a candidate regression function $\hat{f}$,
\begin{itemize}
\item an approximate 95\% confidence interval for $U^{\nu} = \E^{\nu} \edg{\brak{Y - \hat{f}(X)}^2}$ using \eqref{conub}.
\item an approximate 95\% confidence interval for $D^{\nu} = \E^{\nu} \edg{\brak{Y - \bar{f}(X)}^2}$
using \eqref{cond}.
\item an estimate of the relative error $\n{\hat{f} - \bar{f}}_{L^2(\nu)}/ \n{\bar{f}}_{L^2(\nu)}$
of the form $\sqrt{F^{\nu}_N/C^{\nu}_N}$ for $C^{\nu}_N$ and $F^{\nu}_N$ given in 
\eqref{CDnuN} and \eqref{FnuN}, respectively. Note that while the theoretical values 
$C^{\nu} = \n{\bar{f}}_{L^2(\nu)}$ and $F^{\nu} = \n{\hat{f} - \bar{f}}^2_{L^2(\nu)}$ are both
non-negative, in some of our examples, $F^{\nu}$ is close to zero. So due to Monte Carlo noise,
the estimate $F^{\nu}_N$ can become negative. In these cases, we report 
$- \sqrt{- F^{\nu}_N/C^{\nu}_N}$ instead of $\sqrt{F^{\nu}_N/C^{\nu}_N}$.
\item 
an approximate 95\% confidence upper bound for the error $\n{\hat{f} - \bar{f}}_{L^2(\nu)}$ based on 
\eqref{ciF} expressed as a fraction of $\n{\bar{f}}_{L^2(\nu)}$ as estimated by $C^{\nu}_N$ given in \eqref{CDnuN}.
\item 
The time in seconds it took to compute the approximate regression function $\hat{f}$.
\end{itemize}

In Sections \ref{ex:poly} and \ref{ex:non-poly} below we used
$N = 6 \times 10^8$ independent Monte Carlo simulations $(X^n, Y^n, Z^n)$ 
to compute the estimates 
$U^{\nu}_N$, $D^{\nu}_N$, $F^{\nu}_N$, $C^{\nu}_N$ together with the corresponding
confidence intervals, whereas in Sections \ref{ex:max-call} and \ref{ex:binary}, 
due to the higher dimensionality of the examples, we only worked with $N = 6 \times 10^7$ 
independent Monte Carlo simulations. 
To fit such large test data sets into the computer memory, we split them into 
6,000 independent batches of 100,000 or 10,000 data points, respectively. 
In most examples, we chose $\nu$ to be equal to the
original distribution $\nu_X$ of $X$, in which case $\E^{\nu}$ equals $\E$. 

All computations were performed on a Nvidia 
GeForce RTX 2080 Ti GPU together with Intel Core Xeon CPUs using 
\classname{Python 3.9.6}, \classname{TensorFlow 2.5.0} with eager mode disabled
and \classname{TensorFlow Probability 0.13.0} on \classname{Fedora 32}. 

\subsection{A four-dimensional polynomial example}
\label{ex:poly}

In our first example, we consider a simple non-linear model for $(X,Y)$ in which the conditional 
expectation $\E[Y \mid X]$ can be computed explicitly. This enables us to benchmark our numerical 
results against the theoretical values. Let $X = (X_1,X_2,X_3,X_4)$ 
be a four-dimensional random vector and $V$, $Y$ random variables such that 
$X_1, \dots, X_4, V$ are i.i.d.\ standard normal and $Y$ is of the form
\be \label{Ypoly}
Y = X_1 + X_2^2 + X_3 X_4 + V.
\ee
Then the conditional expectation is
\be \label{trueregr}
\E[Y \mid X]=X_1+ X_2^2 + X_3 X_4,
\ee
from which the minimal mean squared distance under $\p$ can be seen to be
\[
D^{\nu_X} = \E \edg{\brak{Y - \E[Y \mid X]}^2}  = \E \edg{V^2} = 1.\] 
Replacing $V$ by $0$ in the expression \eqref{Ypoly} would suggest to use the
additional feature $a(X) = X_1 + X^2_2 + X_3 X_4$. However, since 
this would directly solve the problem, we are not using it in this example.

Our numerical results are listed in Table \ref{tab:1}. 
More details are provided in Table \ref{tab:7} in the Appendix.
As could be expected, since the true regression function 
\eqref{trueregr} is a second order polynomial, the accuracy of the linear regression is poor, 
while the the second order polynomial regression 
works very well. All three neural networks provide results comparable to the one of 
the second order polynomial regression, albeit with more computational effort.

\begin{table}[h!]
\centering
{\footnotesize
\begin{tabular}{l|c|c|c|c|c}
\thead{}  &  95\% CI  $U^{\nu_X}$ &  95\% CI $D^{\nu_X}$ & 
\(\displaystyle \frac{\n{\hat{f} - \bar{f}}_{L^2(\nu_X)}}{\n{\bar{f}}_{L^2(\nu_X)}} \)
& \( \displaystyle \frac{\mbox{\footnotesize 95\% CB } \n{\hat{f} - \bar{f}}_{L^2(\nu_X)}}{\n{\bar{f}}_{L^2(\nu_X)}} \) & \begin{minipage}{9mm} comp. time\\ for $\hat{f}$ \end{minipage} \cr
\midrule
\makecell{lin. regr.}   &    [3.99957, 4.00105] &  [0.99982, 1.00040] &  77.46 \% & 77.47 \% & 0.1 s \cr
\myrowcolour
\makecell{poly. regr.}   
  &    [0.99979, 1.00002] &  [0.99982, 1.00040] & 0.25 \% &              0.68 \% &  0.1 s \cr
\makecell{NN tanh}    &   [0.99986, 1.00009] &  [0.99982, 1.00040] & 0.45 \% & 0.77 \%  & 1332.0 s  \cr
\myrowcolour
\makecell{NN ReLU}    &   [1.00007, 1.00030] &  [0.99982, 1.00040] &  0.79 \% & 1.01 \% & 1328.3 s  \cr
\makecell{NN LSE}   &   [0.99988, 1.00010] &  [0.99982, 1.00040] & 0.47 \% &              0.79 \% & 1483.2 s  \cr
\bottomrule
\end{tabular}
}
\caption{Numerical results for the polynomial example \eqref{Ypoly}}
\label{tab:1}
\end{table}

\subsection{A five-dimensional non-polynomial example}
\label{ex:non-poly}

In our second example, we consider a non-polynomial relationship between 
$Y$ and $X$. More precisely, we let $X_1, V_1, \dots, X_5, V_5$ be i.i.d.\ standard 
normal and assume that $Y$
is of the form
\be \label{non-poly}
Y = 5 \log \brak{5 + (X_1+V_1)^2 X_2^2 + V^2_2} \tanh \brak{(X_3 + V_3) (X_4 + V_4) (X_5 + V_5)^2}.
\ee
Then the conditional expectation $\E[Y \mid X]$ is not known in closed form. 
Setting $V_1 = \dots = V_5 = 0$ in 
\eqref{non-poly} suggests to use the additional feature 
\be \label{adnon-poly}
a(X) = 5 \log \brak{5 + X_1^2 X_2^2} \tanh \brak{X_3 X_4 X_5^2}.
\ee

\subsubsection{Minimizing the mean squared distance under $\p$}
\label{ex:non-polyP}

We first search for the function $f \colon \R^d \to \R$ minimizing the mean
squared distance $\E [\brak{Y - f(X)}^2]$ under the original measure $\p$.
The numerical results are reported in Table \ref{tab:2}, and
more details can be found in Table \ref{tab:8} in the Appendix.
It can be seen that the second order polynomial regression yields better results
than the linear regression, but now, both are clearly outperformed by the
three neural network approaches. Moreover, the inclusion of the 
additional feature \eqref{adnon-poly} improves the accuracy of the
linear and second order polynomial regressions, while it does not increase the 
performance of the neural networks significantly.

\begin{table}[h!]
\centering
{\footnotesize
\begin{tabular}{l|c|c|c|c|c}
\thead{}  &  95\% CI  $U^{\nu_X}$ &  95\% CI $D^{\nu_X}$ & 
\(\displaystyle \frac{\n{\hat{f} - \bar{f}}_{L^2(\nu_X)}}{\n{\bar{f}}_{L^2(\nu_X)}} \)
& \( \displaystyle \frac{\mbox{\footnotesize 95\% CB } \n{\hat{f} - \bar{f}}_{L^2(\nu_X)}}{\n{\bar{f}}_{L^2(\nu_X)}} \) & \begin{minipage}{9mm} comp. time\\ for $\hat{f}$ \end{minipage} \cr
\midrule
\makecell{lin. regr.}   &  [41.57390, 41.58212] &  [36.16566, 36.17592] &  100.00 \% & 100.02 \%  &        0.1 s \cr
\myrowcolour
\makecell{lin. regr., add. feature}    
& [37.89192, 37.90030] &  [36.16566, 36.17592] &                56.48 \% &             56.52 \%   &      0.1 s   \cr
\makecell{poly. regr.}     &  [36.66939, 36.67730] &  [36.16566, 36.17592] &                30.47 \% &             30.55 \%  &          0.2 s\cr
\myrowcolour
\makecell{poly. regr., add. feature}   &  [36.39638, 36.40441] &  [36.16566, 36.17592] &                20.58 \% &             20.70 \%   &       0.2 s \cr
\makecell{NN tanh}   &  [36.16812, 36.17617] &  [36.16566, 36.17592] &                 0.90 \% &              2.44 \%  &        1410.3 s \cr
\myrowcolour
\makecell{NN tanh, add. feature}   &  [36.16801, 36.17606] &  [36.16566, 36.17592] &  0.75 \% &              2.38 \%  &     1440.5 s   \cr
\makecell{NN ReLU}   &  [36.16800, 36.17605] &  [36.16566, 36.17592] & 0.74 \% &  2.38 \%   & 1399.0 s \cr
\myrowcolour
\makecell{NN ReLU, add. feature}   
&  [36.16775, 36.17579] &  [36.16566, 36.17592] &  0.15 \% &   2.27 \%     &       1470.0 s \cr
\makecell{NN LSE}   &  [36.16757, 36.17562] &  [36.16566, 36.17592] & -0.49 \% &    2.21 \%    &        1579.7 s \cr
\myrowcolour
\makecell{NN LSE, add. feature}    
&  [36.16764, 36.17569] &  [36.16566, 36.17592] & -0.36 \% &              2.24 \%   &     1532.1 s   \cr
\bottomrule
\end{tabular}
}
\caption{Numerical results for the non-polynomial example \eqref{non-poly} regressed 
under $\p$}
\label{tab:2}
\end{table}

\subsubsection{Minimizing the mean squared distance under a distorted measure $\p^{\nu}$}
\label{ex:non-polydistorted}

As a variant, we numerically minimize the mean squared distance 
$\E^{\nu} [\brak{Y - f(X)}^2]$ in the model \eqref{non-poly} with respect to a
distorted measure $\p^{\nu}$ under which $X_1, V_1 \dots, X_5, V_5$ are independent 
with $X_1, \dots X_5 \sim N(1,1/10)$ and $V_1, \dots, V_5 \sim N(0,1)$. The measure 
$\nu$ concentrates more mass around the point $(1,\dots, 1) \in \R^5$ than the original 
distribution $\nu_X$ of $X$. But since $(V_1, \dots, V_5)$ has the same distribution 
as under $\p$, the minimizing function $f$ coincides with the same theoretical 
regression function $\bar{f}$ as before. 
However, $L^2$-norms are now measured with respect to $\p^{\nu}$ instead of $\p$, which 
leads to different numerical results in Tables \ref{tab:3} and \ref{tab:9} compared to Tables \ref{tab:2} and \ref{tab:8}.
It can be seen that, as before in the $\p$-minimization, the three neural networks provide 
better results than the second order polynomial regression, which works better than the linear 
regression. But now, including the additional feature \eqref{adnon-poly} only improves the 
accuracy of the linear regression slightly, while it does not help the other methods.

\begin{table}[h!]
\centering
{\footnotesize
\begin{tabular}{l|c|c|c|c|c}
\thead{}  &  95\% CI  $U^{\nu}$ &  95\% CI $D^{\nu}$ & 
\(\displaystyle \frac{\n{\hat{f} - \bar{f}}_{L^2(\nu)}}{\n{\bar{f}}_{L^2(\nu)}} \)
& \( \displaystyle \frac{\mbox{\footnotesize 95\% CB } \n{\hat{f} - \bar{f}}_{L^2(\nu)}}{\n{\bar{f}}_{L^2(\nu)}} \) & \begin{minipage}{9mm} comp. time\\ for $\hat{f}$ \end{minipage} \cr
\midrule
\makecell{lin. regr.}   &  [39.93194, 39.94025] &  [39.84392, 39.85452] &              8.75 \% &              8.89 \%  &      0.1 s  \cr
\myrowcolour
\makecell{lin. regr., add. feature}   &  [39.92223, 39.93055] &  [39.84392, 39.85452] &                 8.24 \% &              8.39 \%  &      0.1 s \cr
\makecell{poly. regr.}     &  [39.85144, 39.85974] &  [39.84392, 39.85452] &                   2.01 \% &              2.56 \%     &      0.2 s  \cr
\myrowcolour
\makecell{poly. regr., add. feature}   &  [39.85101, 39.85930] &  [39.84392, 39.85452] & 1.92 \% &              2.48 \%   &     0.2 s \cr
\makecell{NN tanh}   &  [39.84711, 39.85541] &  [39.84392, 39.85452] &                 0.33 \% &              1.62 \%   &      1373 s    \cr
\myrowcolour
\makecell{NN tanh, add. feature}   &  [39.84716, 39.85546] &  [39.84392, 39.85452] &                  0.40 \% &              1.63 \%    &       1449 s   \cr
\makecell{NN ReLU}   &  [39.84717, 39.85547] &  [39.84392, 39.85452] &                 0.41 \% &              1.63 \%   &     1390 s      \cr
\myrowcolour
\makecell{NN ReLU, add. feature}   &  [39.84724, 39.85554] &  [39.84392, 39.85452] &   0.48 \% &              1.65 \%    &      1411 s    \cr
\makecell{NN LSE}   &   [39.84710, 39.85541] &  [39.84392, 39.85452] & 0.32 \% &               1.62 \%    &        1515 s  \cr
\myrowcolour
\makecell{NN LSE, add. feature}     &  [39.84717, 39.85547] &  [39.84392, 39.85452] &  0.40 \% &              1.63 \%   &     1561 s    \cr
\bottomrule
\end{tabular}
}
\caption{Numerical results for the non-polynomial example \eqref{non-poly} 
regressed under $\p^{\nu}$}
\label{tab:3}
\end{table}

\subsection{Max-call options}
\label{ex:max-call}

Different pricing and risk management problems require a conditional valuation of
a financial product conditional on the state of the world at a later time
\citep[see, e.g.][]{carriere96,longstaff2001valuing, tsitsiklis2001regression,broadie2004stochastic, broadie2008,becker2020pricing, lee2003computing, gordy2010nested, 
broadie2011efficient, BauerReussSinger, Cher}. 

\medskip
\noindent
{\bf Financial Market Model} 
We assume there exists a financial market consisting of a money market account 
offering zero interest rate and $d$ risky securities with risk-neutral dynamics\footnote{We are considering 
a standard multi-dimensional Black--Scholes model with zero interest rate for ease of presentation.
One could also use a more complicated financial market model as long as it is possible to efficiently 
simulate from it.
}
\be \label{St}
S^i_t = S^i_0 \exp \brak{\sigma_i B^i_t - \frac{1}{2} \sigma^2_i  t},
\quad t \ge 0,
\ee
for initial prices $S^i_0 = 10$, volatilities $\sigma_i = (10+i/2)\%$ and 
Brownian motions $B^i$, $i = 1, \dots, d$, with instantaneous 
correlation $\rho = 30\%$ between them. We denote the current
time by $0$ and consider a financial derivative on $S^1, \dots, S^d$
with payoff $\phi(S_T)$ at maturity $T = 1/3$ (four months) for a payoff function 
$\phi \colon \R^d \to \R$. Suppose we are interested in the value of the derivative 
at time $t = 1/52$ (one week from now) conditional on the prices 
$S^1_t, \dots, S^d_t$. According to standard no-arbitrage arguments 
\citep[see, e.g.,][]{KS}, it is given by $\E \edg{\phi(S_T) \mid S_t}$, 
which can be written as $\E[Y \mid X]$ for $Y = \phi(S_T)$ and $X_i = S^i_t, \; i = 1, \dots, d$. 
Note that $Y$ has an explicit representation of the form (R) (see the beginning of Section \ref{sec:est})
since $Y$ can be written as $Y= h(X,V)$ for 
\[
h(x,v) = \phi \brak{x_1 \exp \crl{v_1 - \frac{\sigma^2_1}{2} (T-t)}, 
\dots, x_d \exp \crl{v_d - \frac{\sigma^2_d}{2} (T-t)}}\]
and the random variables
\[
V_i = \sigma_i (B^i_T - B^i_t), \quad i = 1, \dots, d,
\]
which are independent of $X_1, \dots, X_d$. 

\medskip
Let us first consider a $d=100$-dimensional max-call option with a time-$T$ payoff of the form
\be \label{max-call}
\phi(S_T) = \brak{\max_{1 \le i \le d} S^i_T - K}^+ 
\ee
with strike price\footnote{The strike price 16.3 has been chosen so that approximately 
half of the simulated paths end up in the money at time $T$.} $K=16.3$. Since the time-$t$ price 
\[
\E \edg{\brak{\max_{1 \le i \le d} S^i_T - K}^+ \, \bigg| \, S_t}
\]
does not admit a closed form solution, it has to be computed numerically. In this example,
\[
h(X, 0) = \brak{\max_{1 \le i \le d} S^i_t e^{- \sigma^2_i (T-t)/2} - K}^+
\]
is zero with high probability. Therefore, it is not useful as an additional feature.
Instead, we use the additional feature 
\be \label{admax-call}
a(X) = \max_{1 \le i \le d} X_i = \max_{1 \le i \le d} S^i_{t}.
\ee The numerical results are reported in 
Table \ref{tab:4}. Additional results are given in Table \ref{tab:10} in the Appendix.   
It can be seen that the three neural networks outperform the second order polynomial 
regression, which works better than the linear regression. The additional 
feature \eqref{admax-call} does not improve the results of any of the methods significantly.

\begin{table}[h!]
\centering
{\footnotesize
\begin{tabular}{l|c|c|c|c|c}
\thead{}  &  95\% CI  $U^{\nu_X}$ &  95\% CI $D^{\nu_X}$ & 
\(\displaystyle \frac{\n{\hat{f} - \bar{f}}_{L^2(\nu_X)}}{\n{\bar{f}}_{L^2(\nu_X)}} \)
& \( \displaystyle \frac{\mbox{\footnotesize 95\% CB } \n{\hat{f} - \bar{f}}_{L^2(\nu_X)}}{\n{\bar{f}}_{L^2(\nu_X)}} \) & \begin{minipage}{9mm} comp. time\\ for $\hat{f}$ \end{minipage} \cr
\midrule
\makecell{lin. regr.}   &  [6.39829, 6.40817] &  [6.39167, 6.40396] &                 4.52 \% &              5.06 \%   &      1.6 s \cr
\myrowcolour
\makecell{lin. regr., add. feature}  &  [6.39771, 6.40759] &  [6.39167, 6.40396] &                 4.27 \% &              4.84 \%    &       1.7 s  \cr
\makecell{poly. regr.}     &  [6.39556, 6.40542] &  [6.39167, 6.40396] &                 3.22 \% &              3.94 \%    &     83.3 s   \cr
\myrowcolour
\makecell{poly. regr., add. feature}   &  [6.39543, 6.40534] &  [6.39167, 6.40396] &                 3.14 \% &              3.88 \%  &      89.86 s   \cr
\makecell{NN tanh}    &  [6.39249, 6.40237] &  [6.39167, 6.40396] &                -1.01 \% &               2.04 \% &       1636.3 s  \cr
\myrowcolour
\makecell{NN tanh, add. feature}   &  [6.39255, 6.40242] &  [6.39167, 6.40396] &                -0.91 \% &               2.09 \%  &      1647.5 s   \cr
\makecell{NN ReLU}    &  [6.39288, 6.40276] &  [6.39167, 6.40396] &  0.60 \% &              2.36 \%   &     1620.1 s    \cr
\myrowcolour
\makecell{NN ReLU, add. feature}   &   [6.39249, 6.40237] &  [6.39167, 6.40396] &                -1.03 \% &              2.04 \%  &      1650.5 s   \cr
\makecell{NN LSE}    &  [6.39276, 6.40263] &  [6.39167, 6.40396] &                 -0.30 \% &              2.26 \%  &     1807.3 s    \cr
\myrowcolour
\makecell{NN LSE, add. feature}     & [6.39260, 6.40247] &  [6.39167, 6.40396] &                -0.83 \% &              2.13 \%  &      1830.9 s  \cr
\bottomrule
\end{tabular}
}
\caption{Numerical results for the max-call option \eqref{max-call}}
\label{tab:4}
\end{table}

\subsection{Binary options}
\label{ex:binary}

In our next example we consider a $d=100$-dimensional binary option in 
the Financial Market Model of Section \ref{ex:max-call} with time-$T$ payoff 
\be \label{bin}
\phi(S_T) = 10 \times 1_{\crl{\max_{1 \le i \le d} S^i_T \ge K}},
\ee
where, as above, we choose $K=16.3$. Again, the time-$t$ price 
\[
10 \, \E \edg{1_{\crl{\max_{1 \le i \le d} S^i_T \ge K}} \, \Big| \,  S_t}
= 10 \, \p \edg{\max_{1 \le i \le d} S^i_T \ge K \, \Big| \, S_t}
\]
cannot be computed exactly and therefore, has to be evaluated numerically. 
As in Section \ref{ex:max-call}, we use
\be \label{adbin} 
a(X) = \max_{1 \le i \le d} X_i = \max_{1 \le i \le d} S^i_t
\ee
as additional feature.

\subsubsection{Minimizing the mean squared distance under $\p$}
\label{ex:BinBP}

We first compute a function $f$ minimizing the mean
squared distance $\E [\brak{Y - f(X)}^2]$ under the original measure $\p$.
Our main numerical results are listed in Table \ref{tab:5}. Additional results 
are given in Table \ref{tab:11} in the Appendix. Again, the three neural networks 
work better than the second order polynomial regression, which is more accurate 
than the linear regression. Adding the additional feature \eqref{adbin} does not 
have a significant influence on any of the methods.

\begin{table}[h!]
\centering
{\footnotesize
\begin{tabular}{l|c|c|c|c|c}
\thead{}  &  95\% CI  $U^{\nu_X}$ &  95\% CI $D^{\nu_X}$ & 
\(\displaystyle \frac{\n{\hat{f} - \bar{f}}_{L^2(\nu_X)}}{\n{\bar{f}}_{L^2(\nu_X)}} \)
& \( \displaystyle \frac{\mbox{\footnotesize 95\% CB } \n{\hat{f} - \bar{f}}_{L^2(\nu_X)}}{\n{\bar{f}}_{L^2(\nu_X)}} \) & \begin{minipage}{9mm} comp. time\\ for $\hat{f}$ \end{minipage} \cr
\midrule
\makecell{lin. regr.}   &  [24.36374, 24.36761] &  [24.33863, 24.36034] &                 2.52 \% &              2.90 \%   &         4.8 s \cr
\myrowcolour
\makecell{lin. regr., add. feature}   &  [24.36276, 24.36666] &  [24.33863, 24.36034] &  2.45 \% &              2.84 \%   &       4.8 s \cr
\makecell{poly. regr.}    &  [24.35346, 24.35739] &  [24.33863, 24.36034] &                 1.56 \% &              2.12 \%     &      88.7 s \cr
\myrowcolour
\makecell{poly. regr., add. feature}    &  [24.35187, 24.35583] &  [24.33863, 24.36034] &  1.37 \% &              1.98 \%  &     93.3 s    \cr
\makecell{NN tanh}    &  [24.34708, 24.35103] &  [24.33863, 24.36034] &    -0.12 \% &   1.43 \%   &         1622.1 s \cr
\myrowcolour
\makecell{NN tanh, add. feature}    &  [24.34615, 24.35011] &  [24.33863, 24.36034] &   -0.58 \% &              1.31 \%  &       1642.5 s   \cr
\makecell{NN ReLU}    &  [24.34672, 24.35067] &  [24.33863, 24.36034] &   -0.38 \% &  1.38 \%  &         1620.2 s \cr
\myrowcolour
\makecell{NN ReLU, add. feature}    &  [24.34605, 24.35001] &  [24.33863, 24.36034] &   -0.61 \% &              1.29 \%  &      1634.3 s    \cr
\makecell{NN LSE}    &   [24.34774, 24.35169] &  [24.33863, 24.36034] &   0.48 \% &     1.51 \%   &         1837.4 s \cr
\myrowcolour
\makecell{NN LSE, add. feature}      & [24.34626, 24.35022] &  [24.33863, 24.36034] &  -0.55 \% &              1.32 \%   &      1832.5 s    \cr
\bottomrule
\end{tabular}
}
\caption{Numerical results for the binary option \eqref{bin} regressed under $\p$}
\label{tab:5}
\end{table}

\subsubsection{Minimizing the mean squared distance under a distorted measure $\p^{\nu}$}
\label{ex:Bindistorted}

In financial risk management, one usually is interested in the tail of a loss distribution.
If a financial institution sold a contract promising a contingent payoff of 
$\phi(S_T)$ at time $T > 0$, the resulting exposure at time $t < T$ is 
$\E[Y \mid X] = \E[\phi(S_T) \mid S_t]$. To obtain a better approximation of 
$\bar{f} = \E[Y \mid X]$ with $\hat{f}(X)$ in the right tail, the least squares regression 
\eqref{empD} can be performed under a measure $\p^{\nu}$ assigning more 
weight to the right tail of $\bar{f}(X)$ than $\p$. This can be done
as in \citet{Cher}. By \eqref{St},
$X = S_t$ can be written as $X = u(QW)$ for a $d$-dimensional standard 
normal random vector $W$, a $d \times d$-matrix $Q$ satisfying 
\[
Q Q^T =  \left( \begin{array}{ccccc}
1 & 0.3 & \dots & \dots & 0.3\\
0.3 & 1 & \dots & \dots & 0.3\\
\dots & \dots & \dots & \dots & \dots\\
0.3 & \dots & \dots & 1 & 0.3\\
0.3 & \dots & \dots & 0.3 & 1
\end{array} 
\right)
\]
and the function $u \colon \R^d \to \R^d$ given by 
\[
u(x_1, \dots, x_d) = 
\brak{S^1_0 \exp \brak{\sigma_1 \sqrt{t} \, x_1 - \frac{1}{2} \sigma^2_1  t}, 
\dots, S^d_0 \exp \brak{\sigma_d \sqrt{t} \, x_d - \frac{1}{2} \sigma^2_d  t}}.
\]
Even though the regression function $\bar{f}$ is not known in closed form, it follows by monotonicity
from the form of the payoff \eqref{bin} that the mapping $\bar{f} \circ u \colon \R^d \to \R$ 
is increasing in the direction $v = (1, \dots, 1)^T$. So, $\bar{f}(X)$ tends to be large
if $v^T Q W$ is large. Therefore, we tilt $\p$ so that the distribution of $W$ shifts in 
the direction of $Q^T v$. 
Let us denote by $q_{\alpha}$ the standard normal quantile at level $\alpha \in (0,1)$.
Since $v^T Q W/\|v^T Q\|_2$ is one-dimensional standard normal, $W$ lies in the region
\[
G = \crl{w \in \R^d : \frac{v^T Q w}{\|v^T Q\|_2} \ge q_{\alpha}}
\]
with probability $1 - \alpha$, whereas, for  
\[
b = \frac{Q^T v}{\|Q^T v\|_2} q_{\alpha} \in \R^d,
\]
$W + b$ lies in $G$ with probability $1/2$. So if $\nu$ is the distribution of 
$u(Q(W+b))$, the $\p^{\nu}$-probability that $\bar{f}(X)$ is in its right 
$\p$-$(1-\alpha)$-tail is approximately 1/2.
Table \ref{tab:6} shows results for the approximation of $\bar{f}$ under $\nu$
corresponding to $\alpha = 0.99$. More details are given in Table \ref{tab:12} in the Appendix.
Again, the three neural networks outperform the polynomial regression, which 
is more accurate than the linear regression, and the inclusion of the
additional feature \eqref{adbin} does not improve the performance of any of 
the methods significantly.

\begin{table}[h!]
\centering
{\footnotesize
\begin{tabular}{l|c|c|c|c|c}
\thead{}  &  95\% CI  $U^{\nu}$ &  95\% CI $D^{\nu}$ & 
\(\displaystyle \frac{\n{\hat{f} - \bar{f}}_{L^2(\nu)}}{\n{\bar{f}}_{L^2(\nu)}} \)
& \( \displaystyle \frac{\mbox{\footnotesize 95\% CB } \n{\hat{f} - \bar{f}}_{L^2(\nu)}}{\n{\bar{f}}_{L^2(\nu)}} \) & \begin{minipage}{9mm} comp. time\\ for $\hat{f}$ \end{minipage} \cr
\midrule
\makecell{lin. regr.}   &   [21.19898, 21.20766] &  [21.1726, 21.19328] & 2.14 \% & 2.36 \%  &  4.9 s \cr
\myrowcolour
\makecell{lin. regr., add. feature}   &  [21.19584, 21.20468] &  [21.1726, 21.19328] & 1.98 \% &              2.21 \%   & 5.2 s \cr
\makecell{poly. regr.}    & [21.19190, 21.20076] &  [21.1726, 21.19328] & 1.76 \% & 2.02 \%  &      
89.1 s \cr
\myrowcolour
\makecell{poly. regr., add. feature}    & [21.18934, 21.19820] &  [21.1726, 21.19328] & 1.60 \% &               1.88 \% &  91.5 s    \cr
\makecell{NN tanh}    & [21.18205, 21.19086] &  [21.1726, 21.19328] & 0.99 \% & 1.40 \% & 1622.1 s \cr
\myrowcolour
\makecell{NN tanh, add. feature}    &  [21.18185, 21.19064] &  [21.1726, 21.19328] & 0.97 \% &              1.38 \%   &  1642.5 s   \cr
\makecell{NN ReLU}    &  [21.18194, 21.19075] &  [21.1726, 21.19328] & 0.98 \% & 1.39 \% & 1620.2 s \cr
\myrowcolour
\makecell{NN ReLU, add. feature}    & [21.18186, 21.19065] &  [21.1726, 21.19328] & 0.97 \% & 
1.39 \% & 1634.3 s    \cr
\makecell{NN LSE}    & [21.18296, 21.19177] &  [21.1726, 21.19328] & 1.09 \% & 1.47 \%  & 1825.2 s \cr
\myrowcolour
\makecell{NN LSE, add. feature} & [21.18185, 21.19065] &  [21.1726, 21.19328] & 0.97 \% & 1.38 \% &    1837.4 s    \cr
\bottomrule
\end{tabular}
}
\caption{Numerical results for the binary option \eqref{bin} regressed under $\p^{\nu}$}
\label{tab:6}
\end{table}

\section{Conclusion}
\label{sec:conclusion}

In this paper, we have studied the numerical approximation of the conditional expectation
of a square-integrable random variable $Y$ given a number of explanatory random variables 
$X_1, \dots, X_d$ by minimizing the mean squared distance between $Y$ and 
$f(X_1, \dots, X_d)$ over a family ${\cal S}$ of Borel functions $f \colon \R^d \to \R$.
The accuracy of the approximation depends on the suitability of the function family ${\cal S}$ 
and the performance of the numerical method used to solve the minimization problem. 
Using an expected value representation of the minimal mean squared 
distance which does not involve a minimization problem or require knowledge of the 
true regression function, we have derived $L^2$-bounds for the approximation error 
of a numerical solution to a given least squares regression problem. We have illustrated the 
method by computing approximations of conditional expectations in a range of examples 
using linear regression, polynomial regression as well as different 
neural network regressions and estimating their $L^2$-approximation errors.
Our results contribute to trustworthy AI by providing numerical guarantees for a
computational problem lying at the heart of different applications
in various fields.

\begin{appendix}
\section{Additional numerical results}

In this appendix we report in Tables \ref{tab:7}--\ref{tab:12}
for all numerical experiments of Section \ref{sec:ex} our estimates 
\begin{itemize}
\item
$U^{\nu}_N$ of the upper bound $U^{\nu} = \E^{\nu}[(Y - \hat{f}(X))^2]$, see 
\eqref{UnuN} and \eqref{Unu},
\item
$D^{\nu}_N$ of the minimal mean squared distance $D^{\nu} = \E^{\nu}[(Y - \bar{f}(X))^2]$,
see \eqref{CDnuN} and \eqref{Dnu},
\item
$F^{\nu}_N$ of the
the squared $L^2$-approximation error $F^{\nu} = \n{\hat{f} - \bar{f}}^2_{L^2(\nu)}$,
see \eqref{FnuN} and  \eqref{Fnu},
\item
$C^{\nu}_N$ of
the squared $L^2$-norm $C^{\nu} = \n{\bar{f}}^2_{L^2(\nu)}$, see \eqref{CDnuN} and \eqref{Cnu},
\end{itemize}
together with the corresponding sample standard errors $\sqrt{v_N^{U,\nu}/N}$, 
$\sqrt{v_N^{D,\nu}/N}$, $\sqrt{v_N^{F,\nu}/N}$ and $\sqrt{v_N^{C,\nu}/N}$, which 
were used to compute\footnote{To fit the numbers reported in this 
appendix into the tables, we had to round them. The numerical results 
in Section \ref{sec:ex} were computed from slightly more precise approximations
of $U^{\nu}_N$, $D^{\nu}_N$, $F^{\nu}_N$, $C^{\nu}_N$ and their 
standard errors.} 
the quantities in Tables \ref{tab:1}--\ref{tab:6} in Section \ref{sec:ex}.

\begin{table}[H]
\centering
{\footnotesize
\begin{tabular}{l|c|c|c|c|c|c|c|c}
\thead{}  & $U_N^{\nu_X}$ &  \( \displaystyle \sqrt{\frac{v^{U,\nu_X}_{N}}{N}}\) & 
\( \displaystyle D_N^{\nu_X}\) & \( \displaystyle\sqrt{\frac{v^{D,\nu_X}_{N}}{N}}\) & 
$F_N^{\nu_X}$ & \( \displaystyle \sqrt{\frac{v^{F,\nu_X}_{N}}{N}}\) & $C_N^{\nu_X}$ 
& \( \displaystyle \sqrt{\frac{v^{C,\nu_X}_{N}}{N}}\)  \cr
\midrule
\makecell{lin. regr.}  
 & 4.00031 &        0.00038 &      1.00011 &        0.00015 &      3.00033 &        0.00023 &      5.00014 &        0.00049 \cr
 \myrowcolour
\makecell{poly. regr.}   
 & 0.99991 &        0.00006 &      1.00011 &        0.00015 &      0.00003 &        0.00012 &      5.00014 &        0.00049 \cr
\makecell{NN tanh}   
 & 0.99998 &        0.00006 &      1.00011 &        0.00015 &      0.00010 &        0.00012 &      5.00014 &        0.00049  \cr
 \myrowcolour
\makecell{NN ReLU}     & 1.00019 &        0.00006 &      1.00011 &        0.00015 &      0.00031 &        0.00012 &      5.00014 &        0.00049 \cr
\makecell{NN LSE}   
 & 0.99999 &        0.00006 &      1.00011 &        0.00015 &      0.00011 &        0.00012 &      5.00014 &        0.00049 \cr
\bottomrule
\end{tabular}
}
\caption{Additional numerical results for the polynomial example \eqref{Ypoly}}
\label{tab:7}
\end{table}

\vspace*{10mm}

\begin{table}[h!]
\centering
{\footnotesize
\begin{tabular}{l|c|c|c|c|c|c|c|c}
{} & $U_N^{\nu_X}$ & \( \displaystyle \sqrt{\frac{v^{U,\nu_X}_{N}}{N}}\) & 
$D_N^{\nu_X}$ & 
\( \displaystyle \sqrt{\frac{v^{D,\nu_X}_{N}}{N}}\)  & $F_N^{\nu_X}$ &  
\( \displaystyle \sqrt{\frac{v^{F,\nu_X}_{N}}{N}}\) & $C_N^{\nu_X}$ & 
\( \displaystyle \sqrt{\frac{v^{C,\nu_X}_{N}}{N}} \)  \cr
\midrule
\makecell{lin. regr.}  
 & 41.57801 &        0.00210 &     36.17079 &        0.00262 &      5.40722 &        0.00171 &      5.40705 &        0.00187 \cr
 \myrowcolour
\makecell{lin. regr., add. feature}  
 & 37.89611 &        0.00214 &     36.17079 &        0.00262 &      1.72471 &        0.00172 &      5.40705 &        0.00187 \cr
\makecell{poly. regr.}  
 &  36.67335 &        0.00202 &     36.17079 &        0.00262 &      0.50198 &        0.00168 &      5.40705 &        0.00187\cr
 \myrowcolour
\makecell{poly. regr., add. feature}    
 & 36.40040 &        0.00205 &     36.17079 &        0.00262 &      0.22893 &        0.00169 &      5.40705 &        0.00187  \cr
\makecell{NN tanh}   
 & 36.17214 &        0.00205 &     36.17079 &        0.00262 &      0.00044 &        0.00169 &      5.40705 &        0.00187  \cr
 \myrowcolour
\makecell{NN tanh, add. feature}    
 & 36.17204 &        0.00205 &     36.17079 &        0.00262 &      0.00030 &        0.00169 &      5.40705 &        0.00187 \cr
\makecell{NN ReLU}   
 & 36.17202 &        0.00205 &     36.17079 &        0.00262 &      0.00029 &        0.00169 &      5.40705 &        0.00187 \cr
 \myrowcolour
\makecell{NN ReLU, add. feature}    
 & 36.17177 &        0.00205 &     36.17079 &        0.00262 &      0.00001 &        0.00169 &      5.40705 &        0.00187 \cr
\makecell{NN LSE}   
 & 36.17160 &        0.00205 &     36.17079 &        0.00262 &     -0.00013 &        0.00169 &      5.40705 &        0.00187  \cr
 \myrowcolour
\makecell{NN LSE, add. feature}      
 & 36.17166 &        0.00205 &     36.17079 &        0.00262 &     -0.00007 &        0.00169 &      5.40705 &        0.00187 \cr
\bottomrule
\end{tabular}
}
\caption{Additional numerical results for the non-polynomial example
\eqref{non-poly} regressed under $\p$}
\label{tab:8}
\end{table}

\newpage

\begin{table}[H]
\centering
{\footnotesize
\begin{tabular}{l|c|c|c|c|c|c|c|c}
{} & $U_N^{\nu}$ &  \( \displaystyle \sqrt{\frac{v^{U,\nu}_{N}}{N}}\) & $D_N^{\nu}$ &  
\( \displaystyle \sqrt{\frac{v^{D,\nu}_{N}}{N}}\)  & $F_N^{\nu}$ & 
\( \displaystyle \sqrt{\frac{v^{F,\nu}_{N}}{N}} \) & $C_N^{\nu}$ &  
\( \displaystyle \sqrt{\frac{v^{C,\nu}_{N}}{N}}\)  \cr
\midrule
\makecell{lin. regr.}  
 & 39.93609 &        0.00212 &     39.84922 &        0.00271 &      0.08475 &        0.00169 &     11.05717 &        0.00209  \cr
 \myrowcolour
\makecell{lin. regr., add. feature}  
 & 39.92639 &        0.00212 &     39.84922 &        0.00271 &      0.07508 &        0.00169 &     11.05717 &        0.00209 \cr
\makecell{poly. regr.}   
 & 39.85559 &        0.00212 &     39.84922 &        0.00271 &      0.00446 &        0.00169 &     11.05717 &        0.00209 \cr
 \myrowcolour
\makecell{poly. regr., add. feature}   
 &  39.85515 &        0.00211 &     39.84922 &        0.00271 &      0.00406 &        0.00169 &     11.05717 &        0.00209 \cr
\makecell{NN tanh}   
 & 39.85126 &        0.00212 &     39.84922 &        0.00271 &      0.00012 &        0.00169 &     11.05717 &        0.00209 \cr
 \myrowcolour
\makecell{NN tanh, add. feature}     
 & 39.85131 &        0.00212 &     39.84922 &        0.00271 &      0.00018 &        0.00169 &     11.05717 &        0.00209 \cr
\makecell{NN ReLU}   
 & 39.85132 &        0.00212 &     39.84922 &        0.00271 &      0.00018 &        0.00169 &     11.05717 &        0.00209 \cr
 \myrowcolour
\makecell{NN ReLU, add. feature}    
 &  39.85139 &        0.00212 &     39.84922 &        0.00271 &      0.00026 &        0.00169 &     11.05717 &        0.00209  \cr
\makecell{NN LSE}   
 &  39.85125 &        0.00212 &     39.84922 &        0.00271 &      0.00012 &        0.00169 &     11.05717 &        0.00209 \cr
 \myrowcolour
\makecell{NN LSE, add. feature}      
 &  39.85132 &        0.00212 &     39.84922 &        0.00271 &      0.00018 &        0.00169 &     11.05717 &        0.00209  \cr
\bottomrule
\end{tabular}
}
\caption{Additional numerical results for the non-polynomial example regressed under
$\p^{\nu}$}
\label{tab:9}
\end{table}

\vspace*{12mm}

\begin{table}[h!]
\centering
{\footnotesize
\begin{tabular}{l|c|c|c|c|c|c|c|c}
{}  & $U_N^{\nu_X}$ &  \( \displaystyle \sqrt{\frac{v^{U,\nu_X}_{N}}{N}}\) & $D_N^{\nu_X}$ &  
\( \displaystyle \sqrt{\frac{v^{D,\nu_X}_{N}}{N}} \) & $F_N^{\nu_X}$ &  
\( \displaystyle \sqrt{\frac{v^{F,\nu_X}_{N}}{N}} \) & $C_N^{\nu_X}$ &  
\( \displaystyle \sqrt{\frac{v^{C,\nu_X}_{N}}{N}} \)  \cr
\midrule
\makecell{lin. regr.}  
 & 6.40323 &        0.00252 &      6.39782 &        0.00313 &      0.00552 &        0.00086 &      2.70728 &        0.00119 \cr
 \myrowcolour
\makecell{lin. regr., add. feature}  
 & 6.40265 &        0.00252 &      6.39782 &        0.00313 &      0.00494 &        0.00086 &      2.70728 &        0.00119 \cr
\makecell{poly. regr.}  
 & 6.40049 &        0.00252 &      6.39782 &        0.00313 &      0.00280 &        0.00086 &      2.70728 &        0.00119  \cr
 \myrowcolour
\makecell{poly. regr., add. feature}  
 & 6.40038 &        0.00253 &      6.39782 &        0.00313 &      0.00267 &        0.00086 &      2.70728 &        0.00119 \cr
\makecell{NN tanh}   
 & 6.39743 &        0.00252 &      6.39782 &        0.00313 &     -0.00028 &        0.00086 &      2.70728 &        0.00119 \cr
 \myrowcolour
\makecell{NN tanh, add. feature}    
 &  6.39749 &        0.00252 &      6.39782 &        0.00313 &     -0.00023 &        0.00086 &      2.70728 &        0.00119 \cr
\makecell{NN ReLU}   
 & 6.39782 &        0.00252 &      6.39782 &        0.00313 &      0.00010 &        0.00086 &      2.70728 &        0.00119  \cr
 \myrowcolour
\makecell{NN ReLU, add. feature}    
 &  6.39743 &        0.00252 &      6.39782 &        0.00313 &     -0.00029 &        0.00086 &      2.70728 &        0.00119 \cr
\makecell{NN LSE}   
 & 6.39769 &        0.00252 &      6.39782 &        0.00313 &     -0.00002 &        0.00086 &      2.70728 &        0.00119  \cr
 \myrowcolour
\makecell{NN LSE, add. feature}      
 & 6.39754 &        0.00252 &      6.39782 &        0.00313 &     -0.00018 &        0.00086 &      2.70728 &        0.00119 \cr
\bottomrule
\end{tabular}
}
\caption{Additional numerical results for the max-call option \eqref{max-call}}
\label{tab:10}
\end{table}

\newpage

\begin{table}[H]
\centering
{\footnotesize
\begin{tabular}{l|c|c|c|c|c|c|c|c}
{} & $U_N^{\nu_X}$ & \( \displaystyle \sqrt{\frac{v^{U,\nu_X}_{N}}{N}} \) 
& $D_N^{\nu_X}$ & \( \displaystyle \sqrt{\frac{v^{D,\nu_X}_{N}}{N}}\)  & $F_N^{\nu_X}$ &  
\( \displaystyle \sqrt{\frac{v^{F,\nu_X}_{N}}{N}} \) & $C_N^{\nu_X}$ 
& \( \displaystyle \sqrt{\frac{v^{C,\nu_X}_{N}}{N}} \)  \cr
\midrule
\makecell{lin. regr.}  
 & 24.36567 &        0.00099 &     24.34948 &        0.00554 &      0.01644 &        0.00322 &     25.87654 &        0.00565 \cr
 \myrowcolour
\makecell{lin. regr., add. feature}  
 & 24.36471 &        0.00099 &     24.34948 &        0.00554 &      0.01553 &        0.00322 &     25.87654 &        0.00565 \cr
\makecell{poly. regr.}   
 & 24.35543 &        0.00100 &     24.34948 &        0.00554 &      0.00631 &        0.00322 &     25.87654 &        0.00565 \cr
 \myrowcolour
\makecell{poly. regr., add. feature}   
 & 24.35385 &        0.00101 &     24.34948 &        0.00554 &      0.00483 &        0.00322 &     25.87654 &        0.00565  \cr
\makecell{NN tanh}  
 &  24.34906 &        0.00101 &     24.34948 &        0.00554 &     -0.00003 &        0.00322 &     25.87654 &        0.00565 \cr
 \myrowcolour
\makecell{NN tanh, add. feature}     
 & 24.34813 &        0.00101 &     24.34948 &        0.00554 &     -0.00088 &        0.00322 &     25.87654 &        0.00565 \cr
\makecell{NN ReLU}   
 & 24.34869 &        0.00101 &     24.34948 &        0.00554 &     -0.00038 &        0.00322 &     25.87654 &        0.00565 \cr
 \myrowcolour
\makecell{NN ReLU, add. feature}    
 & 24.34803 &        0.00101 &     24.34948 &        0.00554 &     -0.00098 &        0.00322 &     25.87654 &        0.00565 \cr
\makecell{NN LSE}   
 & 24.34972 &        0.00101 &     24.34948 &        0.00554 &      0.00059 &        0.00322 &     25.87654 &        0.00565 \cr
 \myrowcolour
\makecell{NN LSE, add. feature}      
 & 24.34824 &        0.00101 &     24.34948 &        0.00554 &     -0.00077 &        0.00322 &     25.87654 &        0.00565  \cr
\bottomrule
\end{tabular}
}
\caption{Additional numerical results for the binary option \eqref{bin}
regressed under $\p$}
\label{tab:11}
\end{table} 

\vspace*{12mm}

\begin{table}[H]
\centering
{\footnotesize
\begin{tabular}{l|c|c|c|c|c|c|c|c}
{} & $U_N^{\nu}$ & \( \displaystyle \sqrt{\frac{v^{U,\nu}_{N}}{N}} \) 
& $D_N^{\nu}$ & \( \displaystyle \sqrt{\frac{v^{D,\nu}_{N}}{N}}\)  & $F_N^{\nu}$ &  
\( \displaystyle \sqrt{\frac{v^{F,\nu}_{N}}{N}} \) & $C_N^{\nu}$ 
& \( \displaystyle \sqrt{\frac{v^{C,\nu}_{N}}{N}} \)  \cr
\midrule
\makecell{lin. regr.}  &
21.20332 &        0.00221 &     21.18294 &        0.00528 &      0.02149 &        0.00279 &     46.90835 &        0.00644 \cr
 \myrowcolour
\makecell{lin. regr., add. feature}  
&     21.20026 &        0.00226 &     21.18294 &        0.00528 &      0.01841 &        0.00279 &     46.90835 &        0.00644  \cr 
\makecell{poly. regr.}  &
21.19633 &        0.00226 &     21.18294 &        0.00528 &      0.01455 &        0.00279 &     46.90835 &        0.00644 \cr
 \myrowcolour
\makecell{poly. regr., add. feature}  &     21.19377 &        0.00226 &     21.18294 &        0.00528 &      0.01198 &        0.00279 &     46.90835 &        0.00644 \cr
\makecell{NN tanh}  &
21.18645 &        0.00225 &     21.18294 &        0.00528 &      0.00466 &        0.00279 &     46.90835 &        0.00644 \cr
 \myrowcolour
\makecell{NN tanh, add. feature} &     21.18624 &        0.00224 &     21.18294 &        0.00528 &      0.00442 &        0.00279 &     46.90835 &        0.00644 \cr
\makecell{NN ReLU}  &
21.18635 &        0.00225 &     21.18294 &        0.00528 &      0.00454 &        0.00279 &     46.90835 &        0.00644 \cr
 \myrowcolour
\makecell{NN ReLU, add. feature} & 21.18625 &        0.00224 &     21.18294 &        0.00528 &      0.00443 &        0.00279 &     46.90835 &        0.00644 \cr
\makecell{NN LSE}  &
21.18736 &        0.00225 &     21.18294 &        0.00528 &      0.00555 &        0.00279 &     46.90835 &        0.00644 \cr
 \myrowcolour
\makecell{NN LSE, add. feature}  &     21.18625 &        0.00224 &     21.18294 &        0.00528 &      0.00442 &        0.00279 &     46.90835 &        0.00644 \cr
\bottomrule
\end{tabular}
}
\caption{Additional numerical results for the binary option \eqref{bin} regressed 
under $\p^{\nu}$}
\label{tab:12}
\end{table} 
\end{appendix}

\newpage

\end{document}